\documentclass[11pt,a4paper]{article}

\usepackage{amsmath,amssymb,amsthm,graphicx,braket,multirow,authblk,amsthm,dcolumn,latexsym,mathtools,cite,cancel}
\usepackage[normalem]{ulem}
\usepackage{subcaption}
\usepackage[a4paper]{geometry}
\usepackage{booktabs}

\usepackage{amsmath,amssymb,amsthm,mathtools}
\usepackage{hyperref}
\usepackage{enumitem}
\usepackage{bm}

\usepackage{tikz}
\usepackage{pgfplots}
\pgfplotsset{compat=1.18}
\newcommand{\Md}{M_d(\mathbb{C})}
\def\<{\langle}
\def\>{\rangle}
\def\oper{{\mathchoice{\rm 1\mskip-4mu l}{\rm 1\mskip-4mu l}
		{\rm 1\mskip-4.5mu l}{\rm 1\mskip-5mu l}}}
\newcommand{\tr}{\operatorname{Tr}}

\newtheorem{theorem}{Theorem}[section]
\newtheorem{proposition}[theorem]{Proposition}
\newtheorem{remark}[theorem]{Remark}
\newtheorem{lemma}[theorem]{Lemma}

\theoremstyle{definition}

\newtheorem{example}{Example}[section]
\def\<{\langle}
\def\>{\rangle}
\def\D{\Delta}
\newcommand{\ip}[2]{\left\langle #1,#2\right\rangle}

\DeclareMathOperator{\re}{Re}

\newcommand{\Tr}{\operatorname{Tr}}

\newcommand{\C}{\mathbb{C}}

\date{}

\begin{document}


\title{{\bf Interpolating between positive, Schwarz, and completely positive evolution for $d$-level systems  }} 


\author{Dariusz Chru{\'s}ci{\'n}ski$^1$ and Farrukh Mukhamedov$^2$}
\affil{$^1$Institute of Physics, Faculty of Physics, Astronomy and Informatics, \\Nicolaus Copernicus University, Grudzi\c{a}dzka 5/7, 87--100 Toru{\'n}, Poland\\
$^2$Department of Mathematical Sciences\\
College of Science, The United Arab Emirates University\\
P.O. Box, 15551, Al Ain, Abu Dhabi, UAE}

\maketitle

\begin{abstract}

We study a class of quantum dynamical maps for $d$-level systems that interpolate between positive, Schwarz, and completely positive evolutions. Our approach is based on a geometric analysis of the parameter space, which reveals the structure of regions corresponding to different positivity classes and their boundaries. We show that dynamical trajectories naturally move across these regions, providing a clear geometric interpretation of transitions between Markovian and non-Markovian regimes. It is shown that within presented class the evolution becomes eventually entanglement breaking.  This analysis highlights the role of divisibility and 
eternally non-Markovian evolution. 
\end{abstract}

\section{Introduction}

Markovian dynamical semigroups are governed by the celebrated
Gorini--Kossakowski-Lindblad--Sudarshan (GKLS) master equation
\cite{L,GKS,Alicki} $\dot\rho_t = \mathcal{L}(\rho_t)$, where the
corresponding Markovian generator has the following form
\begin{equation}
  \mathcal{L}(\rho) = -i[H,\rho] + \sum_k \gamma_k
  \left(L_k \rho L_k^\dagger - \tfrac{1}{2}\{L_k^\dagger L_k,\rho\}\right),
  \label{GKLS}
\end{equation}
where $H$ denotes an effective system Hamiltonian, $L_k$ are noise
(jump) operators, and $\gamma_k > 0$ are positive transition rates
(in what follows we set $\hbar=1$ and denote the anti-commutator
$\{A,B\} = AB+BA$). This gives rise to the general representation for
the generator of the Markovian semigroup
$\{\Lambda_t = e^{t\mathcal{L}}\}_{t\ge 0}$ of completely positive
trace-preserving maps (CPTP) \cite{Paulsen,STORMER,Bhatia}. Solutions of the GKLS master
equation define very good approximations to the evolution of many real
systems, provided the system--environment interaction is sufficiently
weak and there is a large enough separation of timescales between
the system and environment \cite{Open1,Open2,Open3,Open4}.

In this paper we generalize a paradigmatic qubit generator
\begin{equation}
  \mathcal{L}(\rho) = -i\tfrac{\omega}{2}[\sigma_z,\rho]
  + \kappa\!\left((\sigma_+\rho\sigma_- + \sigma_-\rho\sigma_+ - \rho)
  + \tfrac{\nu}{2}(\sigma_z\rho\sigma_z - \rho)\right),
  \label{qubit-gen}
\end{equation}
where $\sigma_+$ and $\sigma_- = \sigma_+^\dagger$ are standard
raising and lowering qubit operators, for arbitrary $d>2$. Clearly,
\eqref{qubit-gen} defines a legitimate GKLS generator if $\kappa,\nu\ge 0$.
Interestingly, even if $\nu < 0$ the dynamical map $e^{t\mathcal{L}}$
may be positive or even may satisfy an operator Schwarz inequality \cite{Paulsen,STORMER,Bhatia}
\begin{equation}
  \Lambda_t(XX^\dagger) \ge \Lambda_t(X)\Lambda_t(X^\dagger),
  \label{schwarz}
\end{equation}
for all $X\in M_d(\mathbb{C})$. Indeed, one shows that $\Lambda_t = e^{t\mathcal{L}}$ is
positive for all $t\ge 0$ iff $\kappa > 0$, $\nu \ge -1$, and it satisfies
\eqref{schwarz} iff $\kappa > 0$, $\nu \ge -\tfrac{1}{2}$ \cite{KS,Sergey,Erica,Theret,Theret2,CKM,PR}. Interestingly, the Schwarz
inequality \eqref{schwarz} plays an important role in the analysis of
asymptotic properties of quantum dynamics \cite{Facchi} and in the study of
quantum channels in general \cite{Wolf} (cf. also \cite{Alex}). Why to consider dynamical maps beyond standard completely positive scenario?  Recall, that the GKLS equation is not the only master equation
used for describing quantum dynamics. An important alternative, widely employed in
magnetic resonance, quantum optics, and condensed matter, is the
Bloch--Redfield equation \cite{Bloch-1,Bloch-2}, derived from the Born--Markov
approximation without performing the additional secular (rotating-wave)
approximation required to reach GKLS form. The secular approximation
eliminates rapidly oscillating terms and thereby enforces complete
positivity. The Born-Redfield equation usually yields more accurate coherence dynamics in multilevel systems. 
However, the price paid is that the  Born-Redfield generator is \emph{not} of GKLS form, and
the corresponding dynamical maps are not guaranteed to be completely
positive: for sufficiently strong coupling or at short times it can produce transient violations of complete positivity, and in certain parameter regimes even of positivity  \cite{Open1,Whitney,Fabio}. Another motivation to go beyond completely positive maps is the analysis of non-Markovian dynamics and in partucular P-divisible or Schwarz-divisible evolutions. The dynamical map $\{\Lambda_t\}_{t \geq 0}$ is completely positive but the family of propagators $\{\Lambda_{t,s}\}_{t\geq s}$ in general fails to be completely positive. One calls  $\{\Lambda_t\}_{t \geq 0}$ P-divisible if all $\Lambda_{t,s}$ are positive and Schwarz-divisible if all propagators in the Heisenberg picture satisfy operator Schwarz inequality \cite{KS}.

In this paper we analyze the $d$-level generalization of
\eqref{qubit-gen}. We provide conditions which guarantee that the
corresponding semigroup $e^{t\mathcal{L}}$ is (i) completely positive, (ii) positive, or (iii) satisfies
the Schwarz inequality. This way we provide a new class of Schwarz maps
in the matrix algebra $M_d(\mathbb{C})$. Recently~\cite{Bihalan} such maps
were analyzed in the qubit case. Characterization beyond $d=2$ is
notoriously difficult. Moreover, we analyze time-dependent generators
and analyze non-Markovian evolution, i.e.\ evolution characterized by
dynamical maps $\Lambda_t$ which being CPTP are not CP-divisible (i.e. propagators $\Lambda_{t,s}$ are not completely positive). We
characterize a class of admissible dynamical maps which are P-divisible
and Schwarz-divisible. In particular we construct an example of
$d$-level \emph{eternally} non-Markovian dynamics which provides a
direct generalization of the well known qubit evolution~\cite{Erica}. 
Interestingly, our analysis has a clear geometric interpretation. In particular,  eternally non-Markovian dynamics belongs to the border of the convex set of completely positive maps. It is shown that the class of dynamics studied in this paper is eventually entanglement breaking \cite{EB,EB1,EB2}. For a semigroup (completely positive or just positive) evolutions becomes entanglement breaking after a finite time. However, we analyze a non-Markovian scenario for which it becomes entanglement breaking only asymptotically.

\section{Preliminaries: conditionally positive and dissipative generators}

Let $\Md$ denote the space of $d \times d$ complex matrices. A linear map $\Phi : \Md \to \Md$ is called positive if for any $X \geq 0$ one has $\Phi(X) \geq 0$. It is $k$-positive if the extended map

\[   {\rm id}_k \otimes \Phi : \mathcal{M}(\Md) \to  \mathcal{M}(\Md) , \]
is positive. Finally, $\Phi$ is completely positive (CP) if it is positive for all $k=1,2,\ldots$. Actually, $\Phi : \Md \to \Md$ is CP if and only if it is $d$-positive. Given $\Phi$ one defines its Hilbert-Schmidt adjoint $\Phi^\ddag : \Md \to \Md$  via

\begin{equation}
    (\Phi^\ddag(X),Y)_{\rm HS} := (X,\Phi(Y))_{\rm HS} ,   
\end{equation}
where $(X,Y)_{\rm HS} = \Tr(X^\dagger Y)$. Now, $\Phi^\ddag$ is $k$-positive if and only if $\Phi$ is $k$-positive. and $\Phi^\ddag$ is unital (trace-preserving) if and only if $\Phi$ is trace preserving (unital). 
A unital linear map $\Phi : \Md \to \Md$, that is, $\Phi(\oper_d) = \oper_d$,  is called a Schwarz map if

\begin{equation}\label{KS}
  \Phi(X^\dagger X) \geq \Phi(X)^\dagger \Phi(X)  ,
\end{equation}
for all $X \in \Md$ \cite{Paulsen,STORMER,KADISON,Bhatia}. It was shown by Kadison \cite{Kadison1,Kadison2} that if $\Phi$ is a positive unital map then it satisfies celebrated Kadison inequality

\begin{equation}\label{K}
  \Phi(X^2) \geq \Phi(X)^2   ,
\end{equation}
for all $X^\dagger=X$. However, not all positive unital maps satisfy (\ref{KS}). The simplest example is provided by the transposition map.

Consider now a semigroup $\{\Lambda_t = e^{t\mathcal{L}}\}_{t \geq 0}$, where $\mathcal{L} : \Md \to \Md$ stands for the corresponding generator. $\Lambda_t$ is trace-preserving if and only if $\Tr \Lambda_t(\rho)=0$ for any $\rho \in \Md$. Equivalently, $\mathcal{L}^\ddag(\oper_d)=0$.

\begin{proposition}[\cite{Kos72}] $\Lambda_t=e^{t \mathcal{L}}$ is a semigroup of  positive maps for $t\geq 0$ if and only if

\begin{equation} \label{CPP}
 \<y|\mathcal{L}(|x\>\<x|)|y\> \geq 0 ,
\end{equation}
for any pair of orthogonal vectors $x,y \in \mathbb{C}^d$. 
\end{proposition}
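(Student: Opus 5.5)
We prove the two implications separately. For necessity, assume $\Lambda_t=e^{t\mathcal L}$ is positive for all $t\ge0$. Take orthogonal $x,y$ and set $P=|x\>\<x|$. Then $f(t):=\<y|\Lambda_t(P)|y\>\ge0$ for all $t\ge0$, while $f(0)=\<y|x\>\<x|y\>=0$. Hence the right derivative at $0$ is nonnegative:
\[
f'(0)=\lim_{t\to0^+}\frac{f(t)}{t}=\<y|\mathcal L(|x\>\<x|)|y\>\ge0,
\]
which is exactly (\ref{CPP}). Note that this direction does not use trace preservation.

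For sufficiency we must show that condition (\ref{CPP}) makes every $e^{t\mathcal L}$ positive. The first step is to reduce to a perturbed generator that is strictly positive in the off-diagonal sense. Let $\mathcal L_\epsilon=\mathcal L+\epsilon\,\mathcal T$, where $\mathcal T(X)=\Tr(X)\oper_d-X$. Then $\<y|\mathcal T(|x\>\<x|)|y\>=\|x\|^2\|y\|^2>0$ for nonzero orthogonal $x,y$. Since $e^{t\mathcal L_\epsilon}\to e^{t\mathcal L}$ as $\epsilon\to0$ and the cone of positive maps is closed, it suffices to treat $\mathcal L_\epsilon$. So assume the inequality in (\ref{CPP}) is strict for all unit orthogonal $x,y$; by compactness it is then bounded below by some $c>0$.

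The second step is a first-exit argument. Fix a pure state $P_0=|x_0\>\<x_0|$; by convexity it suffices to show $\rho_t=e^{t\mathcal L}(P_0)$ stays positive semidefinite. Positivity can be lost only if $\lambda_{\min}(\rho_t)$ crosses zero. Suppose first that the start is strictly inside the cone, with $\rho_0=P_0+\delta\oper_d$ (using linearity and again a limit $\delta\to0$). Let $t^*$ be the first time $\rho_t$ is singular. Then $\rho_{t^*}\ge0$ and there is a unit $y$ with $\rho_{t^*}y=0$. Write the spectral decomposition $\rho_{t^*}=\sum_k p_k|x_k\>\<x_k|$ with $p_k\ge0$ and $x_k\perp y$. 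Then
\[
\frac{d}{dt}\<y|\rho_t|y\>\Big|_{t^*}=\sum_k p_k\<y|\mathcal L(|x_k\>\<x_k|)|y\>\ge c\sum_kp_k>0
\]
whenever $\rho_{t^*}\ne0$. But $\<y|\rho_t|y\>>0$ for $t<t^*$ and $=0$ at $t^*$, so this derivative must be $\le0$, a contradiction. This also covers eigenvalue crossings without any differentiability of eigenvalues, since we only differentiate the smooth function $t\mapsto\<y|\rho_t|y\>$ for a fixed $y$.

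The degenerate case $\rho_{t^*}=0$ must be excluded, and this step and the strictness perturbation are where care is needed. Since $\mathcal T$ is not trace preserving, I would exclude $\rho_{t^*}=0$ as follows. The flow is invertible, so $\rho_{t^*}=0$ would force $\rho_0=0$, which is impossible. Finally, send $\delta\to0$ and then $\epsilon\to0$ to conclude.
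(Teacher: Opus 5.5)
The paper gives no proof of this proposition; it is quoted from \cite{Kos72}. Your argument is correct and complete for Hermiticity-preserving $\mathcal L$, but it follows a different route from the standard one.

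Necessity via the right derivative of $t\mapsto\<y|\Lambda_t(|x\>\<x|)|y\>$ at $t=0$ is the usual argument. For sufficiency you run a first-exit (barrier) argument directly on the cone of positive semidefinite matrices. You first make the condition strict with $\epsilon\mathcal T$ and push the initial point into the interior with $\delta\oper_d$. In effect this is the cross-positivity criterion for invariance of a closed convex cone under a linear flow, specialised to the self-dual PSD cone. Its extreme rays are the $|x\>\<x|$, which is exactly why only orthogonal pairs $x\perp y$ need testing.

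The standard route to Kossakowski's theorem instead uses trace preservation. A Hermiticity- and trace-preserving map is positive iff it is a trace-norm contraction on Hermitian matrices, and the Lumer--Phillips theorem then identifies the orthogonal-vector condition with trace-norm dissipativity of $\mathcal L$. That route ties positivity directly to contractivity of the trace distance, the notion behind the P-divisibility discussed later in the paper, and it generalises to infinite dimensions. Yours is elementary and does not need trace preservation at all.

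Your treatment of the case $\rho_{t^*}=0$ by invertibility of $e^{t^*\mathcal L_\epsilon}$ is right. The compactness step is unnecessary, since $\<y|\mathcal L_\epsilon(|x\>\<x|)|y\>\ge\epsilon$ holds directly for unit orthogonal $x,y$.

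You should, however, state explicitly that $\mathcal L$ preserves Hermiticity. You use this when you speak of $\lambda_{\min}(\rho_t)$ and of the spectral decomposition of $\rho_{t^*}$, and it does not follow from (\ref{CPP}). For example, $\mathcal L(X)=iX$ satisfies (\ref{CPP}) with equality, yet $e^{\pi\mathcal L}=-{\rm id}$ is not positive. In the paper's setting, with generators of the form (\ref{HPGen}), the hypothesis holds automatically. So this is an omission in the literal statement rather than a flaw in your reasoning.
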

Similarly,

\begin{proposition} $\Lambda_t=e^{t \mathcal{L}}$ is a semigroup of  $k$-positive maps for $t\geq 0$ if and only if

\begin{equation} \label{kCPP}
 \<\tilde{y}|[{\rm id}_k \otimes \mathcal{L}](|\tilde{x}\>\<\tilde{x}|)|\tilde{y}\> \geq 0 ,
\end{equation}
for any pair of orthogonal vectors $\tilde{x},\tilde{y} \in \mathbb{C}^k \otimes \mathbb{C}^d$. 
\end{proposition}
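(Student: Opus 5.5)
The plan is to reduce the statement to the previous Proposition (Kossakowski's criterion) applied to the lifted generator. The key observation is that ${\rm id}_k\otimes e^{t\mathcal{L}} = e^{t\,({\rm id}_k\otimes\mathcal{L})}$, since ${\rm id}_k\otimes(\cdot)$ is an algebra homomorphism on linear maps and is continuous, so it commutes with the exponential series. Hence $\Lambda_t$ is $k$-positive for all $t\ge0$ if and only if the semigroup generated by $\tilde{\mathcal{L}}:={\rm id}_k\otimes\mathcal{L}$ on $M_k(\mathbb{C})\otimes\Md\cong M_{kd}(\mathbb{C})$ is a semigroup of positive maps. Applying the previous Proposition to $\tilde{\mathcal{L}}$ with $d$ replaced by $kd$ then yields exactly condition \eqref{kCPP}.

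Because I only need to invoke the previous Proposition, I will not reprove it. For completeness, though, I would recall the two directions in the lifted setting.

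\emph{Necessity.} Fix orthogonal $\tilde x,\tilde y$. Positivity gives
\[
0\le \<\tilde y|e^{t\tilde{\mathcal{L}}}(|\tilde x\>\<\tilde x|)|\tilde y\> = t\,\<\tilde y|\tilde{\mathcal{L}}(|\tilde x\>\<\tilde x|)|\tilde y\> + O(t^2),
\]
since the zeroth-order term $|\<\tilde y|\tilde x\>|^2$ vanishes. Dividing by $t$ and letting $t\to0^+$ gives \eqref{kCPP}.

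\emph{Sufficiency.} Condition \eqref{kCPP} says that $\tilde{\mathcal{L}}$ is "conditionally positive" on rank-one projectors. I would show that for $\epsilon>0$ the perturbed generator $\tilde{\mathcal{L}}_\epsilon=\tilde{\mathcal{L}}+\epsilon\,\Omega$, with $\Omega(X)=\Tr(X)\oper-X$ (which strictly satisfies the criterion), generates maps that cannot leave the positive cone. Suppose the minimal eigenvalue of $e^{t\tilde{\mathcal{L}}_\epsilon}(P)$ first hits zero at time $t_0$ with eigenvector $\tilde y$. Write $X_{t_0}$ as a sum of rank-one positives $|\tilde x_i\>\<\tilde x_i|$ orthogonal to $\tilde y$. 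Then \eqref{kCPP} together with the strict term from $\Omega$ forces the derivative of $\<\tilde y|X_t|\tilde y\>$ to be positive at $t_0$, which is a contradiction. Letting $\epsilon\to0$ then gives positivity of $e^{t\tilde{\mathcal{L}}}$.

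The only point needing care is the decomposition step: one must check that the relevant orthogonality survives the tensor lift. Since the criterion is stated for arbitrary orthogonal vectors in $\mathbb{C}^k\otimes\mathbb{C}^d$, including entangled ones, this is automatic, so I expect no real obstacle beyond verifying that the exponential commutes with ${\rm id}_k\otimes\cdot$.
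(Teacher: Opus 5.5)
Your proof is correct and is what the paper intends by ``Similarly'' (it gives no separate proof): since $({\rm id}_k\otimes\mathcal{L})^n={\rm id}_k\otimes\mathcal{L}^n$, one has ${\rm id}_k\otimes e^{t\mathcal{L}}=e^{t({\rm id}_k\otimes\mathcal{L})}$, and applying the preceding (Kossakowski) proposition on $M_{kd}(\mathbb{C})$ gives exactly the stated condition. One small fix to your optional sufficiency sketch: start from a strictly positive definite operator and conclude by density, because a projector $P$ already has zero eigenvalues at $t=0$, so ``first hits zero'' is not well defined for it.
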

A map $\mathcal{L} : \Md \to \Md$ satisfying (\ref{CPP}) is called conditionally positive \cite{Evans-1979,Evans-1977}.  

\begin{proposition} $\mathcal{L}$ generates a semigroup of CP maps iff it is conditionally completely positive, i.e.

\begin{equation}  \label{CCP}
 P^+_\perp C_\mathcal{L} P^+_\perp \geq 0 ,
\end{equation}
where $C_\mathcal{L}$ stands for the Choi matrix of $\mathcal{L}$, and $P^+$ denotes a projector onto maximally entangled state in $\mathbb{C}^d \otimes \mathbb{C}^d$ ($P^+_\perp = \oper_d \otimes \oper_d - P^+$). \end{proposition}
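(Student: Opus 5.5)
The plan is to reduce the statement to the $k$-positive criterion of the preceding Proposition with $k=d$, and then to rewrite that criterion in terms of the Choi matrix. Since a map on $\Md$ is CP iff it is $d$-positive, $e^{t\mathcal{L}}$ is CP for all $t\ge0$ iff ${\rm id}_d\otimes e^{t\mathcal{L}}=e^{t({\rm id}_d\otimes\mathcal{L})}$ is positive for all $t\ge 0$. By the preceding Proposition this holds iff
\begin{equation*}
\<\tilde y|[{\rm id}_d\otimes\mathcal{L}](|\tilde x\>\<\tilde x|)|\tilde y\>\ge 0
\end{equation*}
for all orthogonal $\tilde x,\tilde y\in\C^d\otimes\C^d$. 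It therefore suffices to show that this family of inequalities is equivalent to $P^+_\perp C_\mathcal{L}P^+_\perp\ge0$, where $C_\mathcal{L}=\sum_{ij}|i\>\<j|\otimes\mathcal{L}(|i\>\<j|)=d\,[{\rm id}\otimes\mathcal{L}](P^+)$.

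\textbf{Necessity.} Take $\tilde x=\sqrt d\,|\psi^+\>$, the maximally entangled vector, and let $\tilde y$ range over $(\psi^+)^\perp$. The inequalities then read exactly $\<\tilde y|C_\mathcal{L}|\tilde y\>\ge0$ for all $\tilde y\perp\psi^+$, which is the statement $P^+_\perp C_\mathcal{L}P^+_\perp\ge 0$.

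\textbf{Sufficiency.} This is the main step. Any $\tilde x$ can be written as $\tilde x=(\oper\otimes A^T)\tilde x_0$ up to normalization, with $\tilde x_0=\sum_i|ii\>$ and $A\in\Md$. Using the transpose trick, $(\oper\otimes A^T)|\tilde x_0\>$ equals $(A\otimes\oper)|\tilde x_0\>$, so
\begin{equation*}
[{\rm id}\otimes\mathcal{L}](|\tilde x\>\<\tilde x|)=(A\otimes\oper)\,C_\mathcal{L}\,(A^\dagger\otimes\oper).
\end{equation*}
This follows because $A\otimes\oper$ acts on the first factor, which commutes with ${\rm id}\otimes\mathcal{L}$. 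Now set $\tilde z=(A^\dagger\otimes\oper)\tilde y$. Then
\begin{equation*}
\<\tilde y|[{\rm id}\otimes\mathcal{L}](|\tilde x\>\<\tilde x|)|\tilde y\>=\<\tilde z|C_\mathcal{L}|\tilde z\>,
\end{equation*}
and $\<\tilde x_0|\tilde z\>=\<(A\otimes\oper)\tilde x_0|\tilde y\>=\<\tilde x|\tilde y\>=0$. Hence $\tilde z\perp\psi^+$, and the hypothesis $P^+_\perp C_\mathcal{L}P^+_\perp\ge0$ gives nonnegativity.

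The only delicate point is the bookkeeping of which tensor factor $A$ acts on and the transpose identity, so that orthogonality transfers correctly. Alternatively, I could bypass the preceding Proposition and use Lindblad's theorem directly: conditional complete positivity lets one write $\mathcal{L}=\Phi-K\cdot-\cdot K^\dagger$ with $\Phi$ CP, obtained by decomposing $C_\mathcal{L}=P^+_\perp C P^+_\perp+(\text{terms involving }P^+)$. The Trotter/Dyson expansion of $e^{t\mathcal{L}}$ then exhibits it as a limit of compositions of CP maps. The first route is shorter, so that is the one I would follow.
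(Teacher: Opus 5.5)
The paper gives no proof of this Proposition; it is quoted from the literature (Lindblad, Gorini--Kossakowski--Sudarshan, Evans). There, sufficiency is usually proved by using $P^+_\perp C_\mathcal{L}P^+_\perp\ge 0$ to write $\mathcal{L}(\rho)=\Phi(\rho)-K\rho-\rho K^\dagger$ with $\Phi$ completely positive, followed by a Lie--Trotter product formula. Necessity is the derivative at $t=0^+$ of $\langle\tilde y|({\rm id}\otimes e^{t\mathcal{L}})(P^+)|\tilde y\rangle\ge 0$ for $\tilde y\perp\psi^+$.

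Your route is different and correct. You apply the Kossakowski-type criterion to ${\rm id}_d\otimes\mathcal{L}$. You then note that every test pair $(\tilde x,\tilde y)$ is carried to a test pair $(\tilde x_0,\tilde z)$ with $\tilde z=(A^\dagger\otimes\oper)\tilde y\perp\tilde x_0$. This works because $\tilde x=(A\otimes\oper)\tilde x_0$ and ${\rm id}\otimes\mathcal{L}$ commutes with left and right multiplication on the first tensor factor. The transpose identity and the transfer of orthogonality are handled correctly.

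Your approach is short and explains cleanly why the maximally entangled input is the only one that needs checking. The price is that the analytic content (conditional positivity implies positivity of $e^{t\mathcal{L}}$, a cone-invariance theorem) is delegated to the preceding Proposition, which the paper also states without proof. You also do not obtain the structure $\mathcal{L}=\Phi-K\cdot-\cdot K^\dagger$ that the standard argument produces along the way.

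One caveat, which you share with the paper: both the statement and the criterion you invoke tacitly assume that $\mathcal{L}$ is Hermiticity-preserving. Without it, sufficiency fails. For $\mathcal{L}=i\,{\rm id}$ one has $C_\mathcal{L}=i\,d\,P^+$, so $P^+_\perp C_\mathcal{L}P^+_\perp=0$ and all your test quantities vanish, yet $e^{t\mathcal{L}}=e^{it}\,{\rm id}$ is not positive. Necessity yields Hermiticity preservation automatically, but for sufficiency you should state it as a standing hypothesis. It holds for all generators considered in the paper.
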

Lindblad \cite{L} provided the following  condition for the generator $\mathcal{L}$ for which  $\{\Lambda^\ddag_t\}_{t\geq 0}  $ is a semigroup of Schwarz maps.

\begin{proposition}[\cite{L}]  \label{PRO-L} $\mathcal{L}^\ddag$ generates a semigroup $\{\Lambda_t^\ddag\}_{t\geq 0}$ of unital Schwarz maps if and only if $\mathcal{L}^\ddag(\oper)=0$ and

\begin{equation}\label{L!}
  \mathcal{L}^\ddag(X^\dagger X) \geq \mathcal{L}^\ddag(X^\dagger) X + X^\dagger \mathcal{L}^\ddag(X) ,
\end{equation}
for all $X  \in \Md$.
\end{proposition}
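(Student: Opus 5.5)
Write $\Lambda_t^\ddag = e^{t\mathcal{L}^\ddag}$ and, for fixed $X\in\Md$, introduce the dissipation function
\[
D_t(X) := \Lambda_t^\ddag(X^\dagger X) - \Lambda_t^\ddag(X)^\dagger \Lambda_t^\ddag(X).
\]
First, unitality. If every $\Lambda_t^\ddag$ is unital, then $e^{t\mathcal{L}^\ddag}(\oper)=\oper$ for all $t$, and differentiating at $t=0$ gives $\mathcal{L}^\ddag(\oper)=0$. Conversely, $\mathcal{L}^\ddag(\oper)=0$ implies $e^{t\mathcal{L}^\ddag}(\oper)=\oper$ by the power series. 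So unitality is equivalent to $\mathcal{L}^\ddag(\oper)=0$, and from now on we assume it.

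\emph{Necessity.} Suppose every $\Lambda_t^\ddag$ is Schwarz. Then $D_t(X)\ge 0$ for $t\ge0$ and $D_0(X)=0$, so the right derivative at $0$ is positive semidefinite. Expanding to first order,
\[
\frac{d}{dt}D_t(X)\Big|_{t=0} = \mathcal{L}^\ddag(X^\dagger X) - \mathcal{L}^\ddag(X)^\dagger X - X^\dagger \mathcal{L}^\ddag(X).
\]
Since $\mathcal{L}^\ddag$ is Hermiticity preserving (it generates positive maps), $\mathcal{L}^\ddag(X)^\dagger = \mathcal{L}^\ddag(X^\dagger)$, and this is exactly (\ref{L!}).

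\emph{Sufficiency.} Write $\mathcal{L}^\ddag = \mathcal{K} + \mathcal{D}$, where $\mathcal{D}$ is a derivation-like part, so that $e^{t\mathcal{L}^\ddag}$ is a limit of products. Concretely, set
\[
\Phi_\epsilon := {\rm id} + \epsilon\mathcal{L}^\ddag, \qquad \Lambda_t^\ddag = \lim_{n\to\infty}\Phi_{t/n}^{\,n}.
\]
The Schwarz property survives composition of unital Schwarz maps: if $\Phi,\Psi$ are unital Schwarz maps, then $\Phi\Psi(X^\dagger X)\ge \Phi(\Psi(X)^\dagger\Psi(X))\ge \Phi\Psi(X)^\dagger\Phi\Psi(X)$, where the first step uses that Schwarz maps are positive. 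It also survives limits, since the cone of positive semidefinite matrices is closed.

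The difficulty is that $\Phi_\epsilon$ itself is only Schwarz up to $O(\epsilon^2)$:
\[
\Phi_\epsilon(X^\dagger X)-\Phi_\epsilon(X)^\dagger\Phi_\epsilon(X)=\epsilon\,\Delta(X)-\epsilon^2\mathcal{L}^\ddag(X)^\dagger\mathcal{L}^\ddag(X),
\]
where $\Delta(X)\ge0$ is the left side of (\ref{L!}) minus its right side. The $\epsilon^2$ term is negative and is not controlled by $\epsilon\Delta$ when $\Delta(X)$ is degenerate, so the Trotter route needs a correction.

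I would therefore use a differential-inequality argument instead. Fix $s>0$ and set $Y=\Lambda^\ddag_{s}(X)$. One computes
\[
\frac{d}{ds}\,\Lambda^\ddag_{t-s}\big(\Lambda^\ddag_s(X)^\dagger\Lambda^\ddag_s(X)\big) = -\Lambda^\ddag_{t-s}\big(\Delta(Y)\big).
\]
Integrating from $s=0$ to $s=t$ gives
\[
D_t(X)=\int_0^t \Lambda^\ddag_{t-s}\big(\Delta(\Lambda^\ddag_s X)\big)\,ds .
\]
This is nonnegative provided each $\Lambda^\ddag_{r}$ is positive.

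Positivity of $\Lambda^\ddag_r$ follows from (\ref{L!}) applied with $X=|x\>\<y|$-type elements. Those choices yield conditional positivity (\ref{CPP}) for $\mathcal{L}$, so Kossakowski's proposition gives positivity of the semigroup.

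The main obstacle is this last reduction: checking that (\ref{L!}) with rank-one $X$ indeed implies $\<y|\mathcal{L}(|x\>\<x|)|y\>\ge0$ for orthogonal $x,y$, by taking $X=|x\>\<y|$ and sandwiching with $\<y|\cdot|y\>$. Under that choice the right-hand side of (\ref{L!}) vanishes after sandwiching, leaving the required inequality after passing to the dual.
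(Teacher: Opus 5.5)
The paper gives no proof of this proposition; it quotes Lindblad. Your overall route is sound. Necessity follows by differentiating $D_t(X)\ge 0$ at $t=0$. Sufficiency follows from the Duhamel identity $D_t(X)=\int_0^t\Lambda^\ddag_{t-s}\bigl(\Delta(\Lambda^\ddag_s X)\bigr)\,ds$, where your $\Delta$ is the defect in \eqref{L!}, not the paper's dephasing. That identity reduces the Schwarz property to positivity of the semigroup. Your diagnosis of the explicit Euler scheme ${\rm id}+\epsilon\mathcal{L}^\ddag$ is also right. The usual repair is the resolvent $({\rm id}-\epsilon\mathcal{L}^\ddag)^{-1}$: writing $X=Y-\epsilon\mathcal{L}^\ddag(Y)$, the $\epsilon^2$ term enters $X^\dagger X$ with the favourable sign. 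That route, too, needs positivity first.

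There is, however, a slip in exactly the step you flag as the main obstacle. With $X=|x\>\<y|$ and $\|x\|=1$, one has $X^\dagger X=|y\>\<y|$. The right-hand side of \eqref{L!} is then $\mathcal{L}^\ddag(|y\>\<x|)\,|x\>\<y|+|y\>\<x|\,\mathcal{L}^\ddag(|x\>\<y|)$. The first term ends in $\<y|$ and the second starts with $|y\>$, so both are killed by sandwiching with a vector \emph{orthogonal} to $y$, for instance $\<x|\cdot|x\>$.

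Sandwiching with $\<y|\cdot|y\>$, as you propose, does not make them vanish. It leaves the cross terms $\<y|\mathcal{L}^\ddag(|y\>\<x|)|x\>+\<x|\mathcal{L}^\ddag(|x\>\<y|)|y\>$ on the right. On the left it leaves the diagonal entry $\<y|\mathcal{L}^\ddag(|y\>\<y|)|y\>$, which is not the quantity in \eqref{CPP}. With the correct sandwich you get $\<x|\mathcal{L}^\ddag(|y\>\<y|)|x\>=\<y|\mathcal{L}(|x\>\<x|)|y\>\ge 0$, which is what Kossakowski's criterion needs. Alternatively, \eqref{L!} restricted to $X=X^\dagger$ is \eqref{KS-LH}, so Proposition~\ref{P-KS} gives positivity at once.

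Finally, in the sufficiency direction you silently use $\mathcal{L}^\ddag(Y)^\dagger=\mathcal{L}^\ddag(Y^\dagger)$. You need it to identify $\frac{d}{ds}(Y_s^\dagger Y_s)$ with the right-hand side of \eqref{L!}, and Kossakowski's criterion needs it as well. Your necessity-side justification ("it generates positive maps") is not available there. Nor does the property follow from the hypotheses: $\mathcal{L}^\ddag=[H,\,\cdot\,]$ with $H=H^\dagger$ annihilates $\oper$ and satisfies \eqref{L!} with equality, yet $e^{t\mathcal{L}^\ddag}(X)=e^{tH}Xe^{-tH}$ is not positive. So Hermiticity preservation must be stated as a standing assumption, as in Lindblad's setting and implicitly in the paper.
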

A map $\mathcal{L} : \Md \to \Md$ satisfying (\ref{L!}) is called {\em dissipative} \cite{L} and it is called $k$-dissipative if ${\rm id}_k \otimes \mathcal{L}$ is dissipative. Finally, $\mathcal{L}$ is completely dissipative if it is $k$-dissipative for all $k$. One proves \cite{L,Evans-1979,Evans-1977}
that $\mathcal{L}$ is completely dissipative if and only if it is  conditionally completely positive. 
Interestingly, restricting (\ref{L!}) to Hermitian operators one arrives at \cite{Evans-1979,Evans-1977}

\begin{proposition} \label{P-KS} $\mathcal{L}^\ddag$ generates a semigroup $\{\Lambda_t^\ddag\}_{t\geq 0}$ of positive unital maps if and only if $\mathcal{L}^\ddag(\oper)=0$ and

\begin{equation}\label{KS-LH}
  \mathcal{L}^\ddag(X^2) \geq \mathcal{L}^\ddag(X) X + X \mathcal{L}^\ddag(X) ,
\end{equation}
for all $X =X^\dagger \in \Md$.
\end{proposition}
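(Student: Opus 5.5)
Both directions go through the derivative of the Kadison defect at $t=0$, using the standard fact that for a unital positive map $\Phi$ the Kadison inequality \eqref{K} holds for every Hermitian $X$.

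\emph{Necessity.} Assume $\{\Lambda^\ddag_t\}_{t\geq0}$ consists of positive unital maps. Unitality for all $t$ gives $\mathcal{L}^\ddag(\oper)=\frac{d}{dt}\Lambda_t^\ddag(\oper)|_{t=0}=0$. Fix $X=X^\dagger$ and set
\[ D_t:=\Lambda^\ddag_t(X^2)-\Lambda^\ddag_t(X)^2 . \]
By \eqref{K}, $D_t\geq0$ for all $t$, and $D_0=X^2-X^2=0$. Hence $\frac{1}{t}D_t\geq0$ for $t>0$, and letting $t\to0^+$ gives $\dot D_0\geq0$. Differentiating,
\[ \dot D_0=\mathcal{L}^\ddag(X^2)-\mathcal{L}^\ddag(X)X-X\mathcal{L}^\ddag(X), \]
which is exactly \eqref{KS-LH}.

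\emph{Sufficiency.} I would reduce to the Kossakowski criterion \eqref{CPP}, applied to the dual semigroup; positivity of $\Lambda_t^\ddag$ is equivalent to positivity of $\Lambda_t$. Since $\mathcal{L}^\ddag$ generates $\Lambda_t^\ddag$, the criterion for it reads $\<y|\mathcal{L}^\ddag(|x\>\<x|)|y\>\geq0$ for orthonormal $x,y$. Take the Hermitian projector $X=|x\>\<x|$, so that $X^2=X$. Then \eqref{KS-LH} gives
\[ \mathcal{L}^\ddag(X)\geq \mathcal{L}^\ddag(X)X+X\mathcal{L}^\ddag(X). \]
Sandwiching with $y\perp x$ kills both right-hand terms, since $Xy=0$ and $\<y|X=0$. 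This leaves $\<y|\mathcal{L}^\ddag(|x\>\<x|)|y\>\geq0$, which is \eqref{CPP} for $\mathcal{L}^\ddag$. Therefore $e^{t\mathcal{L}^\ddag}$ is positive. Unitality follows from $\mathcal{L}^\ddag(\oper)=0$, because then $e^{t\mathcal{L}^\ddag}(\oper)=\oper$.

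\emph{Main obstacle.} The only delicate point is justifying that the Kossakowski proposition applies to $\mathcal{L}^\ddag$ in place of $\mathcal{L}$. That proposition is a statement about an arbitrary generator on $\Md$, independent of trace preservation or unitality, so it does apply. If one prefers a self-contained argument, the standard route is as follows. Conditional positivity implies $\mathcal{L}^\ddag+c\,\mathrm{id}$ is positive on positive rank-one operators up to terms of the form $KP+PK^\dagger$. One then uses a Lie–Trotter splitting $e^{t\mathcal{L}^\ddag}=\lim_n\bigl(e^{tK\cdot/n}(\cdot)e^{tK^\dagger/n}\circ e^{t\Psi/n}\bigr)^n$ with $\Psi$ positive. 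Identifying the decomposition $\mathcal{L}^\ddag=\Psi+K(\cdot)+(\cdot)K^\dagger$ is the step requiring care. Since the proposition is stated earlier, I would simply invoke it.
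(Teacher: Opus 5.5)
The paper gives no proof of this proposition. It cites Evans and presents the result as the restriction of Lindblad's criterion \eqref{L!} to Hermitian $X$. Your argument is the standard proof behind that citation, and both steps are correct. For necessity you differentiate the Kadison defect at $t=0$, which is Lindblad's argument with \eqref{K} in place of \eqref{KS}. For sufficiency you evaluate \eqref{KS-LH} on a rank-one projector to obtain the conditional positivity \eqref{CPP} of $\mathcal{L}^\ddag$.

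The one point to fix is the step you flagged as delicate, and your resolution of it is wrong. Kossakowski's criterion does \emph{not} hold for an arbitrary linear map on $\Md$. It needs $\mathcal{L}^\ddag$ to be Hermiticity preserving, and so does the proposition itself. Take $\mathcal{L}^\ddag(X)=[H',X]$ with $H'=H'^\dagger$ not proportional to $\oper$. Then:
\begin{itemize}
\item $\mathcal{L}^\ddag(\oper)=0$;
\item $\mathcal{L}^\ddag(X^2)-\mathcal{L}^\ddag(X)X-X\mathcal{L}^\ddag(X)=0$ for every $X$, as for any derivation;
\item $\<y|\mathcal{L}^\ddag(|x\>\<x|)|y\>=0$ whenever $y\perp x$.
\end{itemize}
Yet $e^{t\mathcal{L}^\ddag}(|x\>\<x|)=|e^{tH'}x\>\<e^{-tH'}x|$, which for generic $x$ and $t>0$ is not even Hermitian, let alone positive.

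So you must state explicitly the standing assumption that $\mathcal{L}^\ddag$ is Hermiticity preserving, i.e.\ of the form \eqref{HPGen}; the paper assumes this tacitly. Under that hypothesis your sufficiency step goes through unchanged and the proof is complete. The Lie--Trotter sketch in your last paragraph is then unnecessary.
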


Note that a generator of a unital Hermiticity-preserving semigroup has the following representation \cite{L,GKS,Alicki}:
\begin{equation}\label{HPGen}
	\mathcal{L}^\ddag(X) = i[H,X] + \Phi^\ddag(X) - \frac 12 \{\Phi^\ddag(\oper),X\} ,
\end{equation}
{where $H = H^\dagger$ and $\Phi$ is a Hermiticity-preserving map. Using \eqref{L!}}, one finds \cite{L} that $\mathcal{L}^\ddag$ gives rise to a semigroup of Schwarz maps if and only if

\begin{equation}\label{Phi!}
  \Phi^\ddag(X^\dagger X) \geq \Phi^\ddag(X^\dagger) X + X^\dagger \Phi^\ddag(X) - X^\dagger {\Phi^\ddag(\oper)}X,
\end{equation}
for all $X  \in \mathcal{M}_n$. In particular, if $\Phi$ is trace preserving (equivalently, if $\Phi^\ddag$ is unital)  the above condition reduces to 

\begin{equation}\label{Phi!1}
  \Phi^\ddag(X^\dagger X) \geq \Phi^\ddag(X^\dagger) X + X^\dagger \Phi^\ddag(X) - X^\dagger X,
\end{equation}
Actually, it is sufficient to check (\ref{Phi!}) (or (\ref{Phi!1})) for traceless operators \cite{CKM}.

\section{A class of qudit Markovian generators}

Let $\{|1\>,\ldots,|d\>\}$ be a computational basis in $\mathbb{C}^d$ and $E_{ij} = |i\>\<j|$ be an orthonormal basis in $\Md$ (w.r.t. Hilbert-Schmidt inner product).  Consider a generator defined by 

\begin{equation}  \label{L}
\mathcal{L}(\rho)
=
-i[H,\rho]
+
\kappa \left[ 
\left(\sum_{i\neq j} E_{ij}\rho E_{ji} - (d-1)\rho\right)
+
 \frac{\nu}{d} \left( \sum_{k=1}^{d-1} Z^k \rho Z^{*k} - (d-1) \rho \right) 
\right] ,
\end{equation}
where $H = \sum_k h_k E_{kk}$, and  

$$ Z= \sum_{\ell=1}^{d} e^{2 \pi i \ell/d} E_{\ell\ell} , $$ 
is a diagonal unitary matrix (generalization of $\sigma_z$).  It is clear that for $d=2$ it reduces to (\ref{qubit-gen}). 
Interestingly conditions for positivity and complete positivity are universal and do not depend on the dimension. 
\begin{proposition} A map $\mathcal{L} : \Md \to \Md$ defined in (\ref{L}) is

\begin{enumerate}
\item conditionally completely positive iff $\nu \geq 0 $,
    \item conditionally positive iff $\nu\geq -1$,
\end{enumerate}
    
\end{proposition}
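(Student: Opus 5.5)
The plan is to first simplify the dephasing part of (\ref{L}), then check conditions (\ref{CCP}) and (\ref{CPP}) directly. Throughout I assume $\kappa>0$. The key identity is
$$\sum_{k=0}^{d-1} Z^k\rho Z^{*k} = d\,D(\rho),$$
where $D(\rho)=\sum_i E_{ii}\rho E_{ii}$ is the projection onto the diagonal. It holds because $\sum_{k=0}^{d-1} e^{2\pi i k(\ell-m)/d}=d\,\delta_{\ell m}$. Removing the $k=0$ term and substituting, the $\nu$-part becomes $\frac{\nu}{d}\bigl(d D(\rho)-\rho-(d-1)\rho\bigr)=\nu\,(D(\rho)-\rho)$. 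Hence
$$\mathcal{L}(\rho) = -i[H,\rho] + \kappa\Bigl[\sum_{i\neq j}E_{ij}\rho E_{ji} + \nu D(\rho) - (d-1+\nu)\rho\Bigr].$$
This form is manifestly dimension independent in structure, which explains the universality of the thresholds. Both conditions are insensitive to the terms $-i[H,\rho]$ and $-c\rho$, since these are of the form $K\rho+\rho K^\dagger$. So everything reduces to the map $\Phi(\rho)=\sum_{i\neq j}E_{ij}\rho E_{ji}+\nu D(\rho)$.

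\emph{Conditional complete positivity.} The Choi matrix of $\rho\mapsto K\rho+\rho K^\dagger$ has the form $|\kappa_K\>\<\Omega|+|\Omega\>\<\kappa_K|$ with $|\Omega\>=\sum_i|ii\>$. It is therefore annihilated by compression with $P^+_\perp$ on both sides, and we only need $P^+_\perp C_\Phi P^+_\perp\ge 0$. Up to the ordering convention of the tensor factors, the two pieces of $C_\Phi$ are:
\begin{itemize}
\item the Choi matrix of $\sum_{i\neq j}E_{ij}\cdot E_{ji}$, which is $\sum_{i\neq j}|ij\>\<ij|$, positive and supported on vectors orthogonal to $|\Omega\>$;
\item the Choi matrix of $D$, which is $Q=\sum_i|ii\>\<ii|$.
\end{itemize}
Since $|\Omega\>\in{\rm ran}\,Q$, we get $P^+_\perp Q P^+_\perp=Q-P^+$, a nonzero projector orthogonal to the first block. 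Thus $P^+_\perp C_\mathcal{L}P^+_\perp=\kappa\bigl[\sum_{i\ne j}|ij\>\<ij|+\nu(Q-P^+)\bigr]$, which is positive iff $\nu\ge0$.

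\emph{Conditional positivity.} For orthonormal $x,y$ (normalization is harmless), the $H$ and $\rho$ terms give zero in (\ref{CPP}). The remaining quantity is
$$\<y|\mathcal{L}(|x\>\<x|)|y\>=\kappa\Bigl[\sum_{i\ne j}|x_j|^2|y_i|^2+\nu S\Bigr]=\kappa\bigl[1-(1-\nu)S\bigr],\qquad S=\sum_i|x_i|^2|y_i|^2 .$$
So the main step, which I expect to be the only real obstacle, is to show $\sup S=1/2$ over orthonormal pairs. The example $x=(|1\>+|2\>)/\sqrt2$, $y=(|1\>-|2\>)/\sqrt2$ attains $S=1/2$. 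For the bound, set $p_i=|x_i||y_i|$ and $P=\sum_i p_i$. Then $P\le\sum_i(|x_i|^2+|y_i|^2)/2=1$. Orthogonality $\sum_i\bar x_iy_i=0$ forces, by the triangle inequality, $p_k\le\sum_{i\ne k}p_i$, i.e. $p_k\le P/2$ for every $k$. Hence $S=\sum_ip_i^2\le(\max_k p_k)\,P\le P^2/2\le1/2$. Consequently $1-(1-\nu)S\ge0$ for all orthonormal $x,y$ iff $1-(1-\nu)/2\ge0$, i.e. $\nu\ge-1$; for $\nu\ge1$ the condition holds trivially. This gives item 2, and by the Kossakowski criterion the positivity of the semigroup.
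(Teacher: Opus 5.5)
Your proposal is correct and follows essentially the paper's route: the $P^+_\perp$-compression of the Choi matrix for conditional complete positivity, and the reduction of condition (\ref{CPP}) to $\kappa\bigl[1-(1-\nu)\sum_i|x_i|^2|y_i|^2\bigr]$ combined with the bound $\sum_i|x_i|^2|y_i|^2\le\frac12$ for conditional positivity. Your proof of that bound via $p_k\le P/2$ is only a cosmetic variant of the paper's expansion of $\bigl|\sum_i x_i\overline{y_i}\bigr|^2=0$, and your explicit extremal pair $x,y=(|1\rangle\pm|2\rangle)/\sqrt2$ usefully supplies the tightness needed for the ``only if'' direction, which the paper leaves implicit.
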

\begin{proof}

\begin{enumerate}
    \item $\mathcal{L}$ is conditionally completely positive iff its Choi matrix satisfies (\ref{CCP}).
One finds

\[ \< P^+_\perp|C_\mathcal{L}|P^+_\perp\> =  \kappa \Big( \oper_d \otimes \oper_d - (1-\nu) \sum_{k=1}^d P_k \otimes P_k \Big)  ,\]
and hence $\< P^+_\perp|C_\mathcal{L}|P^+_\perp\> \geq 0$ if and only if $1-\nu \leq 1$, that is,  $\nu \geq 0$. 

\item  $\mathcal{L}$ is conditionally positive iff its Choi matrix satisfies (\ref{CPP}). Let $x=\sum_k x_k e_k$ and $y = \sum_k y_k e_k$ be two orthogonal normalized vectors from $\mathbb{C}^d$. One finds

\[ \< y|\mathcal{L}(|x\>\<x|)|y\> = \kappa \Big( 1 - (1-\nu) \sum_{k=1}^d |x_k|^2 |y_k|^2 \Big) .  \]

\begin{lemma}   \label{L1}
For any normalized and orthogonal vectors \(x,y\in \mathbb{C}^d\)  one has
\begin{equation}
    \sum_{i=1}^d |x_i|^2\,|y_i|^2 \le \frac12.
\end{equation}
\end{lemma}
For the proof see Appendix. The above Lemma immediately implies  that $\< y|\mathcal{L}(|x\>\<x|)|y\> \geq 0$ if and only if $\nu \geq -1$.
\end{enumerate}
\end{proof}

\begin{proposition} \label{P-MAIN} A map $\mathcal{L} : \Md \to \Md$ is dissipative, i.e. satisfies (\ref{L!}),  iff $\nu \geq - \frac{d}{d+2}$.    
\end{proposition}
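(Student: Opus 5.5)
We will work in the Heisenberg picture and use Proposition~\ref{PRO-L}: $\mathcal{L}$ is dissipative iff $\mathcal{L}^\ddag$ satisfies (\ref{L!}). By the remark after (\ref{Phi!1}), it suffices to check traceless $X$. The Hamiltonian part $i[H,X]$ is a derivation, so its contribution to $\mathcal{L}^\ddag(X^\dagger X)-\mathcal{L}^\ddag(X^\dagger)X-X^\dagger\mathcal{L}^\ddag(X)$ vanishes, and we may set $H=0$.

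\textbf{Computing $\mathcal{L}^\ddag$.} Let $\mathbb{D}(X)=\sum_k X_{kk}E_{kk}$ denote the diagonal part. Two identities give the adjoint directly:
\[
\sum_{i\neq j}E_{ji}XE_{ij}=\Tr(X)\oper_d-\mathbb{D}(X),\qquad \sum_{k=0}^{d-1}Z^{*k}XZ^{k}=d\,\mathbb{D}(X).
\]
Hence
\[
\mathcal{L}^\ddag(X)=\kappa\Big[\Tr(X)\oper_d-(d-1+\nu)X+(\nu-1)\mathbb{D}(X)\Big].
\]
We insert this into the dissipation function $\Delta(X)=\mathcal{L}^\ddag(X^\dagger X)-\mathcal{L}^\ddag(X^\dagger)X-X^\dagger\mathcal{L}^\ddag(X)$. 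For $\Tr X=0$ the three pieces contribute as follows.
\begin{itemize}
\item The term $\Tr(X)\oper_d$ contributes $\Tr(X^\dagger X)\oper_d$.
\item The term $-cX$ contributes $+cX^\dagger X$.
\item The term $(\nu-1)\mathbb{D}$ contributes $(\nu-1)\big[\mathbb{D}(X^\dagger X)-\mathbb{D}(X)^\dagger X-X^\dagger\mathbb{D}(X)\big]$.
\end{itemize}
So dissipativity is equivalent to
\[
\Tr(X^\dagger X)\oper_d+(d-1+\nu)X^\dagger X+(\nu-1)\big[\mathbb{D}(X^\dagger X)-\mathbb{D}(X)^\dagger X-X^\dagger \mathbb{D}(X)\big]\ \ge 0
\]
for all traceless $X$. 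This holds trivially for $\nu\ge1$. Because the expression is affine in $\nu$, the admissible set of $\nu$ is an interval $[\nu_*,\infty)$, and the task is to show $\nu_*=-d/(d+2)$.

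\textbf{Reduction and sufficiency.} Write $X=D+Y$ with $D$ diagonal traceless and $Y$ off-diagonal, and test the inequality on an arbitrary unit vector $\psi$. Two symmetries reduce the problem. Conjugation by diagonal unitaries and by permutation matrices commutes with $\mathbb{D}$ and with $\Tr$, so these transformations leave the inequality invariant. Quick sanity checks:
\begin{itemize}
\item For $X=D$ the bracket equals $-D^\dagger D$, and one gets $\Tr|D|^2+d|D|^2\ge0$, so no constraint arises.
\item For $X=E_{12}$ one gets only $\nu\ge-(d-1)/2$.
\end{itemize}
The binding configuration must therefore mix diagonal and off-diagonal parts. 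I would expand $\langle\psi|\Delta(X)|\psi\rangle$ as a Hermitian quadratic form in $(D,Y)$. For sufficiency, I would bound the cross terms $\langle\psi|D^\dagger Y|\psi\rangle$ by Cauchy--Schwarz against $\|D\psi\|^2$, $\|Y\psi\|^2$ and $\Tr(X^\dagger X)$. Tracelessness of $D$ enters through $|D_{kk}|^2\le\frac{d-1}{d}\sum_j|D_{jj}|^2$. With the optimal weights, this should yield nonnegativity exactly when $(1-\nu)\le 2(d+1)/(d+2)$, i.e.\ $\nu\ge-d/(d+2)$.

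\textbf{Necessity.} I would exhibit an explicit extremal family guided by the equality cases of the Cauchy--Schwarz steps. A natural ansatz is
\[
D=\alpha\big(\oper_d-d\,E_{11}\big),\qquad Y=\beta\sum_{j\neq1}(E_{1j}+E_{j1}),\qquad \psi\propto a\,e_1+b\sum_{j\ne1}e_j,
\]
which is symmetric under the permutations fixing $e_1$. This reduces everything to a $2\times2$ problem in $(a,b)$ with parameters $(\alpha,\beta)$. Optimizing over these parameters should show the form goes negative precisely for $\nu<-d/(d+2)$. A consistency check is $d=2$, where this gives $-1/2$, matching the qubit result quoted in the introduction.

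The main obstacle is this global minimization over $X$ and $\psi$: getting the sharp constant $d/(d+2)$ in the sufficiency bound, and making sure the symmetric ansatz really captures the worst case. If the Cauchy--Schwarz route is lossy, the fallback is to diagonalize the quadratic form in the symmetry-reduced variables directly.
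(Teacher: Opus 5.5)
\textbf{Verdict: genuine gap.} Your reduction is correct. With $a=1-\nu$, your dissipation operator is exactly the paper's $M(a,X)$, equivalently $M(a,X)=\Tr(X^\dagger X)\oper_d+d\,X^\dagger X-a\big(N^\dagger N+\Delta(N^\dagger N)\big)$, where $N$ is the off-diagonal part of $X$. But neither direction is actually carried out, and the necessity witness you propose fails for $d\ge 3$.

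\textbf{Necessity.} Your $X=\alpha(\oper_d-dE_{11})+\beta\sum_{j\neq1}(E_{1j}+E_{j1})$ is invariant under permutations fixing $e_1$, so $M(a,X)$ is block diagonal. On $\{g: g_1=0,\ \sum_j g_j=0\}$ it is the scalar $d^2|\alpha|^2+(2(d-1)-a)|\beta|^2\ge0$. On $\mathrm{span}\{e_1,\sum_{j\neq1}e_j\}$, take the worst relative phase ($\alpha\bar\beta\in i\mathbb{R}$) and set $s=d^2|\alpha|^2/|\beta|^2$; positivity then reduces to
\[
(s+d+2-2a)\big(s+(d-1)(d+2)-da\big)\ \ge\ d^2 s\qquad\text{for all } s\ge0 .
\]
For $d=3$ this reads $s^2+(6-5a)s+(5-2a)(10-3a)\ge0$. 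It holds for every $a\le 2$; at $a=2$ it is $(s-2)^2$.

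So your family only detects failure for $\nu<-1$, not for $\nu<-3/5$. For $d=4$ it detects only roughly $\nu<-1.4$. The agreement at $d=2$ holds only because there your family coincides, up to a diagonal unitary, with the true witness. Spreading $X$ over a star costs too much through the term $\Tr(X^\dagger X)\oper_d$.

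The extremal configuration lives on two levels: $X=\begin{pmatrix}1&-c\\ c&-1\end{pmatrix}\oplus 0_{d-2}$. The relevant $2\times2$ block of $M(a,X)$ has smallest eigenvalue $d+2+(d+2-2a)c^2-2d|c|$. This is negative at $c=d/(d+2-2a)$ exactly when $a>2(d+1)/(d+2)$.

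\textbf{Sufficiency.} ``Cauchy--Schwarz with optimal weights should give the sharp constant'' is the whole difficulty, and it is left undone. The one concrete ingredient you name, $|D_{kk}|^2\le\frac{d-1}{d}\|D\|_{HS}^2$, is saturated by $D\propto\oper_d-dE_{11}$, which is your non-extremal family. At the true extremal point $D=\mathrm{diag}(1,-1,0,\dots,0)$ one has $|D_{11}|^2=\frac12\|D\|_{HS}^2$. So estimates built on that inequality are lossy exactly where it matters.

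The missing idea is to eliminate $N$ exactly. The paper does this in steps:
\begin{enumerate}
\item Fix the test vector $y$ and minimize over the entries of $N$ row by row, using Lagrange multipliers with the constraint $\sum_{j\neq i}N_{ij}y_j=(Ny)_i$.
\item This gives $\langle y,M(a,X)y\rangle\ge\sum_i|\delta_i|^2\big(1+dp_i-\frac{d^2p_iS_i}{1+(d-a)S_i}\big)$, with $S_i=\sum_{j\neq i}\frac{p_j}{1-ap_j}$.
\item Convexity of $t\mapsto t/(1-at)$ reduces to $y$ supported on two coordinates $r,s$.
\item Tracelessness is used only in the form $\sum_{i\neq r,s}|\delta_i|^2\ge|\delta_r+\delta_s|^2/(d-2)$.
\item Finally, one checks the trace and determinant of an explicit $2\times2$ matrix in $(\delta_r,\delta_s)$.
\end{enumerate}
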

For the proof see Appendix. Summarizing, 

\[     e^{t \mathcal{L}}  \leftrightarrow \left\{  \begin{array}{ll} {\rm semigroup\ of\ positive\ maps} &  ; \ \nu \geq - 1 \\
{\rm semigroup\ of\ Schwarz\ maps} & ;\ \nu \geq - \frac{d}{d+2} \\
{\rm semigroup\ of\ completely\ positive\ maps} & ; \ \nu \geq 0  \end{array} \right. \ \ .
\]
For $d=2$ we recover the result from \cite{CKM}: $e^{t \mathcal{L}}$ defines a semigroup of Schwarz maps iff $\nu \geq -\frac  12$.

\section{Markovian semigroups}

The dynamical map $\Lambda_t = e^{t \mathcal{L}} $ has the following form

\begin{equation}
    \Lambda_t = \mathcal{U}_t \circ \Phi_t ,
\end{equation}
where $\mathcal{U}_t(\rho) = e^{-i H t}\rho \, e^{i Ht}$, and


\begin{equation}
    \Phi_t(\rho) = e^{-\kappa(d-1+\nu) t} \,\rho + \Big( e^{-\kappa d t} - e^{-\kappa(d-1+\nu) t} \Big) \Delta(\rho) + \frac 1d \Big( 1 - e^{-\kappa d t}\Big)\oper_d \tr\rho .
\end{equation}
It is clear that $\Lambda_t$ is relaxing to the maximally mixed state $\oper_d/d$, i.e. for any initial state $\rho$ one finds

\begin{equation}
    \lim_{t \to \infty} \Lambda_t(\rho) = \frac{\oper_d}{d} . 
\end{equation}
Note, that for any $t$ the map $ \Phi_t$ belongs to the family

\begin{equation} \label{Phi-map}
 \Phi_{\alpha,\beta}   =   (1-\alpha - \beta) {\rm id} + \alpha\, \tau_0 + \beta\, \Delta ,    
\end{equation}
where $\tau_0(\rho) = \frac 1d \oper_d \tr \rho$. One proves \cite{BBC} the following 

\begin{proposition} \label{PCP} A linear map (\ref{Phi-map})

\begin{enumerate}
    \item is positive  iff

\begin{equation}   \label{1P}
    0 \leq \alpha \leq \frac{d}{d-1} \ , \quad - \frac{2\alpha}{d} \leq \beta \leq \frac{d}{d-1} - \alpha , 
\end{equation}

\item is completely positive iff

\begin{equation}  \label{CP}
    0 \leq \alpha \leq \frac{d}{d-1} \ , \quad - \frac{\alpha}{d} \leq \beta \leq \frac{d}{d-1} - \frac{d+1}{d}\, \alpha .
\end{equation}
    
\end{enumerate} 
    
\end{proposition}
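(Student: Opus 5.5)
The plan has two parts. Complete positivity will be settled exactly through the Choi matrix. Positivity will be reduced to an extremal problem over pairs of pure states.

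\textbf{Complete positivity.} Write $\Phi_{\alpha,\beta}=(1-\alpha-\beta){\rm id}+\alpha\tau_0+\beta\Delta$ and compute its Choi matrix $C=\sum_{ij}E_{ij}\otimes\Phi_{\alpha,\beta}(E_{ij})$. Using $C_{\rm id}=dP^+$, $C_{\tau_0}=\frac1d\oper_d\otimes\oper_d$ and $C_\Delta=\sum_k P_k\otimes P_k$ with $P_k=E_{kk}$, we get
\[ C=(1-\alpha-\beta)\,dP^+ +\frac{\alpha}{d}\,\oper_d\otimes\oper_d+\beta\sum_k P_k\otimes P_k . \]
All three terms are block diagonal with respect to the decomposition ${\rm span}\{|kk\rangle\}\oplus{\rm span}\{|kl\rangle:k\neq l\}$.
\begin{itemize}
\item On the off-diagonal subspace $C$ acts as $\frac{\alpha}{d}\oper$, so $\alpha\ge0$ is required.
\item On the $d$-dimensional diagonal subspace $C$ acts as $(1-\alpha-\beta)J+(\frac{\alpha}{d}+\beta)\oper$, where $J$ is the all-ones matrix.
\item The eigenvalues of $J$ are $d$ (once) and $0$ ($d-1$ times). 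This gives the two eigenvalues $d(1-\alpha-\beta)+\frac{\alpha}{d}+\beta$ and $\frac{\alpha}{d}+\beta$.
\end{itemize}
Nonnegativity of these eigenvalues is exactly \eqref{CP}: the second gives $\beta\ge-\alpha/d$, and the first rearranges to $\beta\le\frac{d}{d-1}-\frac{d+1}{d}\alpha$. The bound $\alpha\le\frac{d}{d-1}$ then follows by combining the two inequalities on $\beta$.

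\textbf{Positivity.} By linearity and convexity it suffices to require $\langle y|\Phi_{\alpha,\beta}(|x\rangle\langle x|)|y\rangle\ge0$ for all unit vectors $x,y$. Setting $c=|\langle x|y\rangle|^2\in[0,1]$ and $s=\sum_k|x_k|^2|y_k|^2$, this expression equals
\[ f=(1-\alpha-\beta)\,c+\frac{\alpha}{d}+\beta s . \]
The task is then to determine the set of attainable pairs $(c,s)$, or rather the extreme values of $s$ for each fixed $c$, since $f$ is affine in both variables.

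\textbf{Main obstacle: describing the attainable $(c,s)$.} The key points I would establish are the following.
\begin{enumerate}
\item For $x\perp y$ (so $c=0$), $0\le s\le\frac12$ by Lemma \ref{L1}, and both endpoints are attained: $s=0$ by distinct basis vectors, $s=\frac12$ by $x,y=(e_1\pm e_2)/\sqrt2$.
\item For $x=y$, $s=\sum_k|x_k|^4$ ranges over $[\frac1d,1]$.
\item In general I would try to show that the relevant extreme points of $f$ are
\[ (c,s)\in\Big\{(0,0),\ \big(0,\tfrac12\big),\ \big(1,\tfrac1d\big),\ (1,1)\Big\}, \]
for example by using $s\ge\frac{c}{d}$ (Cauchy--Schwarz, $|\langle x|y\rangle|^2\le d\sum_k|x_k|^2|y_k|^2$) and $s\le\frac{1+c}{2}$.
\end{enumerate}
Evaluating $f$ at these points gives the constraints:
\begin{itemize}
\item $(0,0)$: $\alpha\ge0$;
\item $(0,\tfrac12)$: $\frac{\alpha}{d}+\frac{\beta}{2}\ge0$, i.e.\ $\beta\ge-\frac{2\alpha}{d}$;
\item $(1,\tfrac1d)$: $1-\alpha-\beta+\frac{\alpha}{d}+\frac{\beta}{d}\ge0$, i.e.\ $\beta\le\frac{d}{d-1}-\alpha$;
\item $(1,1)$: $1-\alpha+\frac{\alpha}{d}\ge0$, i.e.\ $\alpha\le\frac{d}{d-1}$.
\end{itemize}
These reproduce \eqref{1P}. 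Necessity follows from the explicit vectors above. For sufficiency, the delicate step is proving that the convex hull of the attainable $(c,s)$ lies inside the quadrilateral spanned by these four points. I would try to prove this by controlling $s$ by an affine function of $c$ on the two relevant edges, namely the lower edge $s\ge\frac{c}{d}$ and the upper edge from $(0,\frac12)$ to $(1,1)$. If that proves awkward, the fallback is to cite \cite{BBC} as the paper does.
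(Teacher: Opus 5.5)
Your argument is correct. It gives more than the paper does: the paper offers no proof of this proposition and simply cites \cite{BBC}. Your complete-positivity part is the same block-diagonal Choi analysis the paper later uses for $\rho_{p,q}$ in the proof of Proposition~\ref{Pro-EB}. The off-diagonal block is $\frac{\alpha}{d}\oper$ and the diagonal block is $(1-\alpha-\beta)J+(\frac{\alpha}{d}+\beta)\oper$, and that part is complete as written.

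For positivity, the step you flag as delicate is in fact immediate, and you already named the right inequality. Set $z_k=\bar x_k y_k$, so that $|\sum_k z_k|^2=c$ and $\sum_k|z_k|^2=s$. The computation in the proof of Lemma~\ref{L1} never needs orthogonality, and it gives
\[ 2s-c=\sum_k|z_k|^2-2\sum_{i<j}\re(z_i\bar z_j)\le\Big(\sum_k|z_k|\Big)^2\le\|x\|^2\|y\|^2=1, \]
that is, $s\le\frac{1+c}{2}$. Combine this with your Cauchy--Schwarz bound $s\ge c/d$ and with $0\le c\le 1$. Every attainable $(c,s)$ then lies in the trapezoid with vertices $(0,0)$, $(0,\frac12)$, $(1,\frac1d)$, $(1,1)$: its parallel sides are vertical, and its lower and upper edges are exactly $s=c/d$ and $s=\frac{1+c}{2}$. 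Since $f$ is affine, this gives sufficiency, and your explicit vectors realizing the four vertices give necessity.

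Compared with the citation, your route is a self-contained proof. It also exhibits, for each face of the positivity region, a pair $(x,y)$ that saturates it. In particular, the edge $\beta=-2\alpha/d$ is fixed by the same pair $x,y=(e_1\pm e_2)/\sqrt2$ that saturates Lemma~\ref{L1} and hence the threshold $\nu\ge-1$. This explains why the $\nu=-1$ trajectory is tangent to that edge at $t=0$.
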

Moreover, one has

\begin{proposition} \label{Pro-EB} A linear map (\ref{Phi-map}) defines an entanglement-breaking (EB) channel iff 

\begin{equation}
 \max \left( -\frac{\alpha}{d}, 1 - \alpha - \frac{\alpha}{d} \right) \leq \beta \leq \min \left( \frac{d}{d - 1} - \frac{d + 1}{d} \alpha, 1 - \alpha + \frac{\alpha}{d} \right) ,
\end{equation}
for any $0 \leq \alpha \leq \frac{d}{d-1} $.
\end{proposition}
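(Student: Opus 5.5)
We use the standard criterion: a channel is entanglement breaking iff its Choi matrix $C_\Phi = \sum_{ij} E_{ij}\otimes \Phi(E_{ij})$ is separable. In the parametrization (\ref{Phi-map}),
\[
C_{\alpha,\beta} = (1-\alpha-\beta)\, d P^+ + \frac{\alpha}{d}\,\oper_d\otimes\oper_d + \beta \sum_{k} E_{kk}\otimes E_{kk} ,
\]
where $dP^+=\sum_{ij}E_{ij}\otimes E_{ij}$. This operator lives in the span of $\oper\otimes\oper$, $\sum_k E_{kk}\otimes E_{kk}$ and $\sum_{ij}E_{ij}\otimes E_{ij}$. It is therefore invariant under the group $U\otimes \bar U$, with $U$ ranging over diagonal phase unitaries and permutation matrices. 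Such states are generalized isotropic/``maximally correlated plus noise'' states, for which separability is known to be equivalent to PPT. The plan is to verify this equivalence for the present family and then read off the PPT conditions.

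\textbf{Step 1 (necessity via PPT).} Since EB requires CP, the CP constraints of Proposition \ref{PCP} are necessary; they give the upper bound $\frac{d}{d-1}-\frac{d+1}{d}\alpha$ and the lower bound $-\alpha/d$. Next apply the partial transpose, which maps $dP^+$ to the flip operator $F=\sum_{ij}E_{ij}\otimes E_{ji}$ and leaves the other two terms unchanged.
\begin{itemize}
\item On the span of $|ij\>,|ji\>$ with $i\neq j$, $F$ has eigenvalues $\pm1$. This gives the eigenvalues $\frac{\alpha}{d}\pm(1-\alpha-\beta)$.
\item On $|kk\>$ the eigenvalue is $(1-\alpha-\beta)+\frac{\alpha}{d}+\beta$.
\end{itemize}
Requiring all of these to be nonnegative gives $|1-\alpha-\beta|\le \alpha/d$, i.e. $1-\alpha-\frac{\alpha}{d}\le\beta\le 1-\alpha+\frac{\alpha}{d}$. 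The remaining eigenvalue $1-\alpha+\alpha/d\ge0$ holds automatically in the range $\alpha\le d/(d-1)$. Combined with CP, this reproduces exactly the stated max/min bounds.

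\textbf{Step 2 (sufficiency: PPT implies separable).} This is the main obstacle. I would first try an explicit decomposition. Write
\[
C_{\alpha,\beta}=\sum_{i\neq j} \Bigl[\tfrac{\alpha}{d}\bigl(E_{ii}\otimes E_{jj}\bigr)+ c\,\bigl(E_{ij}\otimes E_{ij}+E_{ji}\otimes E_{ji}\bigr)\Bigr] + \text{(diagonal block on } |kk\>) ,
\]
where $c=1-\alpha-\beta$. The off-diagonal coherences $|ii\>\<jj|$ can be produced by averaging product states $|\psi_\theta\>\<\psi_\theta|\otimes|\bar\psi_\theta\>\<\bar\psi_\theta|$ (or with $\psi_\theta$, according to the sign of $c$), with $\psi_\theta=\sum_k e^{i\theta_k}|k\>/\sqrt d$, over random phases $\theta$. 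Such an average equals $\frac{1}{d^2}\bigl(\sum_k E_{kk}\otimes E_{kk}\pm\sum_{i\neq j}E_{ij}\otimes E_{ij}\bigr)$ plus the diagonal terms $E_{ii}\otimes E_{jj}$.

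The plan is then to subtract $|c|\, d^2$ times this average. For a negative sign, use $\psi_\theta\otimes\psi_\theta$ with phase averaging $\theta_i+\theta_j$, which yields $E_{ij}\otimes E_{ji}$-type terms; adjust accordingly. The remainder must be a diagonal matrix with nonnegative entries. The $E_{ii}\otimes E_{jj}$ entries become $\alpha/d - |c|\ge0$, which is precisely the PPT condition. The $E_{kk}\otimes E_{kk}$ entries become $c+\alpha/d+\beta-|c|$, which is nonnegative by CP.

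If this bookkeeping fails, the fallback is to invoke the known result that states in the span above (``$U\otimes\bar U$ diagonal-invariant'' states) are separable iff PPT. Alternatively, one can use the convexity of the EB region: check that the vertices of the polygon cut out in the $(\alpha,\beta)$ plane are EB (e.g. $\tau_0$, $\Delta$-type points, and the measure-and-prepare channels at the PPT boundary), and conclude for the whole polygon by convexity.
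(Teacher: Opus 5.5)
Your Step 1 is correct and is the paper's necessity argument: CP together with the partial-transpose eigenvalues $\frac{\alpha}{d}\pm(1-\alpha-\beta)$ and $1-\alpha+\frac{\alpha}{d}$. For $c=1-\alpha-\beta\ge 0$, subtracting $c\,d^2$ times the phase average of $\psi_\theta\otimes\bar\psi_\theta$ is exactly the paper's decomposition $\rho_{p,q}=\frac{t}{d}S+(p-\frac{t}{d})Q+(p+q)D$, with $S=Q+dP^+$ (the paper's ``$Q-dP^+$'' is a sign slip).

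The gap is the case $c<0$. This is the triangle with vertices $\mathcal{E}_1,\mathcal{E}_2,\mathcal{E}_3$ above the line $\beta=1-\alpha$. There your building block would have to be $\frac{1}{d^2}\bigl(\oper_d\otimes\oper_d-\sum_{i\neq j}E_{ij}\otimes E_{ij}\bigr)$. Its restriction to ${\rm span}\{|kk\rangle\}$ is $\frac{1}{d^2}(2\oper_d-\mathbb{J}_d)$, which has eigenvalue $\frac{2-d}{d^2}<0$ along $|\Omega\rangle$. So for $d\ge 3$ it is not even positive semidefinite, and no average of product states produces it. The substitute $\psi_\theta\otimes\psi_\theta$ produces flip-type coherences $E_{ij}\otimes E_{ji}$, which do not occur in $C_{\alpha,\beta}$ at all, so there is nothing to ``adjust''.

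The fallbacks do not close the gap either:
\begin{itemize}
\item Separability $\Leftrightarrow$ PPT is false for general diagonal-$U\otimes\bar U$-invariant states. The $3\times 3$ Horodecki states $\frac{2}{7}P^++\frac{a}{7}\sigma_++\frac{5-a}{7}\sigma_-$ with $3<a\le 4$ are of that form and are PPT-entangled. So the permutation symmetry must genuinely be used, and for the three-parameter span the equivalence is exactly what has to be proved.
\item In the convexity route, $\tau_0=(1,0)$ is not a vertex. The one vertex that matters, $\mathcal{E}_2=(\frac{d}{2(d-1)},\frac12)$, is precisely the point for which you have no separable decomposition.
\end{itemize}

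The missing element is a separable operator with negative $|ii\rangle\langle jj|$ coherences, and it must carry weight $d-1$ rather than $1$ on the $|kk\rangle$ diagonal. Take the product vector $\psi\otimes\chi$ with $\psi=\frac{1}{\sqrt2}(|1\rangle+|2\rangle)$ and $\chi=\frac{1}{\sqrt2}(|1\rangle-|2\rangle)$. Twirl it over diagonal phases $U\otimes\bar U$ and permutations $P_\pi\otimes P_\pi$. The result is $\frac{1}{2d(d-1)}\bigl(Q+(d-1)D-\sum_{i\neq j}E_{ij}\otimes E_{ij}\bigr)$, with $Q=\oper_d\otimes\oper_d-D$. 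It is separable and proportional to the Choi matrix of $\mathcal{E}_2$.

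Now write $C_{\alpha,\beta}=aQ+bD+c\sum_{i\neq j}E_{ij}\otimes E_{ij}$ with $a=\frac{\alpha}{d}$ and $b=1-\alpha+\frac{\alpha}{d}$. For $c<0$ one has
\[ C_{\alpha,\beta}=|c|\Bigl(Q+(d-1)D-\sum_{i\neq j}E_{ij}\otimes E_{ij}\Bigr)+\bigl(a-|c|\bigr)Q+\bigl(b-(d-1)|c|\bigr)D . \]
Here $a-|c|\ge 0$ is the PPT bound $\beta\le 1-\alpha+\frac{\alpha}{d}$, and $b-(d-1)|c|\ge 0$ is the CP bound $\beta\le\frac{d}{d-1}-\frac{d+1}{d}\alpha$. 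This is where the upper CP bound actually enters.

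The paper's written argument has the same limitation as yours. Its convex combination is nonnegative only for $t=1-p-q\ge 0$, so it also omits this half of the region.
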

For the proof see Appendix. All these regions, i.e. positive maps, completely positive maps, and entanglement breaking maps, are presented in Fig. \ref{Fig-1}:

\[ \mbox{Positive maps} = {\rm conv} \{{\rm id}, \Phi, \mathcal{R}, \mathcal{P} \} ,  \]
where

\begin{equation}
    \mathcal{R}(X) = \frac{1}{d-1} \left(\oper_d \tr X -X \right), \quad \mathcal{P}(X) = \frac{1}{d-1} \left( \oper_d \tr X + X -2 \Delta(X) \right),
\end{equation}
are positive and not CP ($\mathcal{R}$ is a well known reduction map). The CP map $\Phi$ corresponding to $(0,\frac{d}{d-1})$ is defined by

\begin{eqnarray}
     \Phi(X) = \frac{1}{d-1} \Big( d\,\Delta(X) - X \Big) .
\end{eqnarray}
Finally, four EB maps $\mathcal{E}_k$ corresponding to $(\alpha_k,\beta_k)$

\[  \mathcal{E}_1 \leftrightarrow (0,1) \ , \ 
 \mathcal{E}_2 \leftrightarrow \left(\frac{d}{2(d-1)},\frac 12\right) \ , \ 
  \mathcal{E}_3 \leftrightarrow \left(\frac{d}{d-1},-\frac{1}{d-1}\right) \ , \ 
  \mathcal{E}_4 \leftrightarrow \left(1,-\frac{1}{d}\right)
 \]
read as follows:

\begin{eqnarray}
     \mathcal{E}_1(X) &=& \Delta(X) \ , \nonumber \\
     \mathcal{E}_2(X) &=& \frac 12 \Big(  \frac{1}{d-1} \oper_d \Tr X + \Delta(X) - \frac{1}{d-1} \, X \Big)  \ , \nonumber \\
     \mathcal{E}_3(X) &=& \frac{1}{d-1}\Big( \oper_d \Tr X - \Delta(X) \Big)  \ ,  \\
     \mathcal{E}_4(X) &=& \frac 1d \Big( X + \oper_d \Tr X - \Delta(X)\Big)  \ . \nonumber 
\end{eqnarray}

\begin{figure}[h!]
    \centering
    \includegraphics[width=10cm]{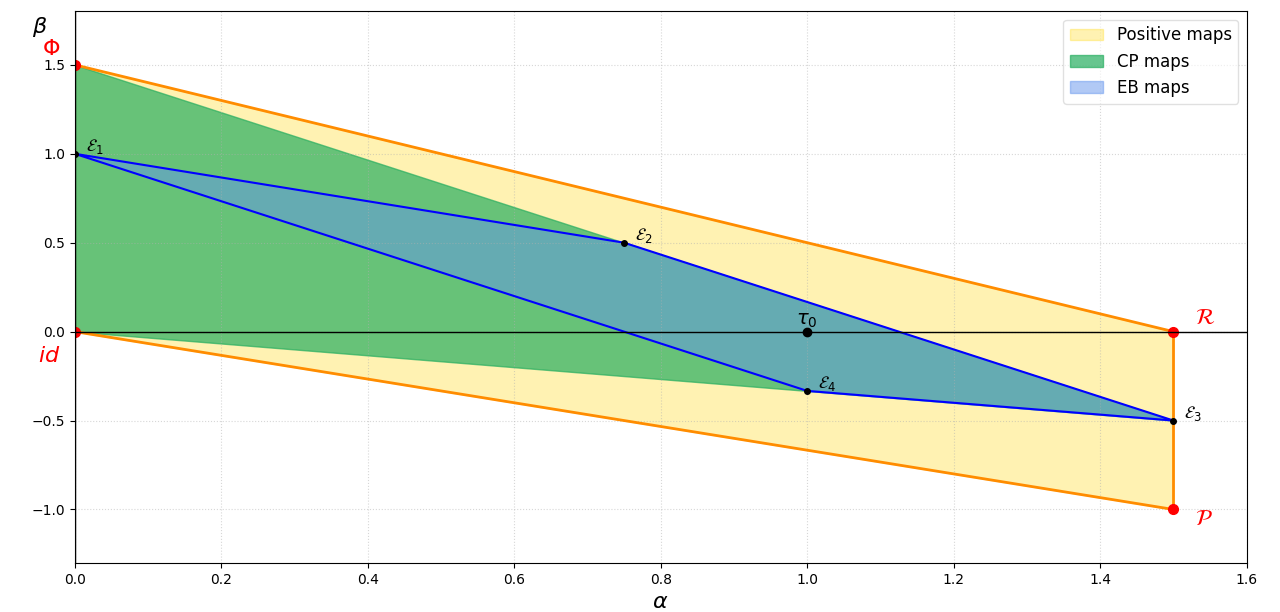} 
    \caption{Regions of positive maps, completely positive maps, and entanglement breaking maps for $d=3$.
  }  \label{Fig-1}
\end{figure}
One easily compute the are of the corresponding regions:
\[   {\rm Area}[P] = \frac{d(d+2)}{2(d-1)^2}\ , \ \ {\rm Area}[CP] = \frac{d^2}{2(d-1)^2}\ , \ \ {\rm Area}[EB] = \frac{3d -2}{2d(d-1)}\ . \ \ \]
Hence, for $d \gg 1$ one finds 

\[ \frac{{\rm Area}[P]}{{\rm Area}[CP]} \to 1 \ ,  \ \ \  \frac{{\rm Area}[EB]}{{\rm Area}[CP]} \to 0 . \]
One finds 

\begin{equation}
    \Phi_t = [1-\alpha(t) - \beta(t)] {\rm id} + \alpha(t) \tau_0 + \beta(t) \Delta ,
\end{equation}
where  the time dependent parameters

\begin{equation}  \label{ab}
   \alpha(t) = 1 - e^{-\kappa d t} \ , \quad \beta(t) =  e^{-\kappa d t} - e^{-\kappa(d-1 +\nu) t} = e^{-\kappa d t}(1 - e^{\kappa (1-\nu) t}).  
\end{equation}
Note, that for any $t \geq 0$ one has $\alpha(t) \geq 0$ and $\beta(t) \leq 0$.

\begin{figure}[h!]
    \centering
    \includegraphics[width=10cm]{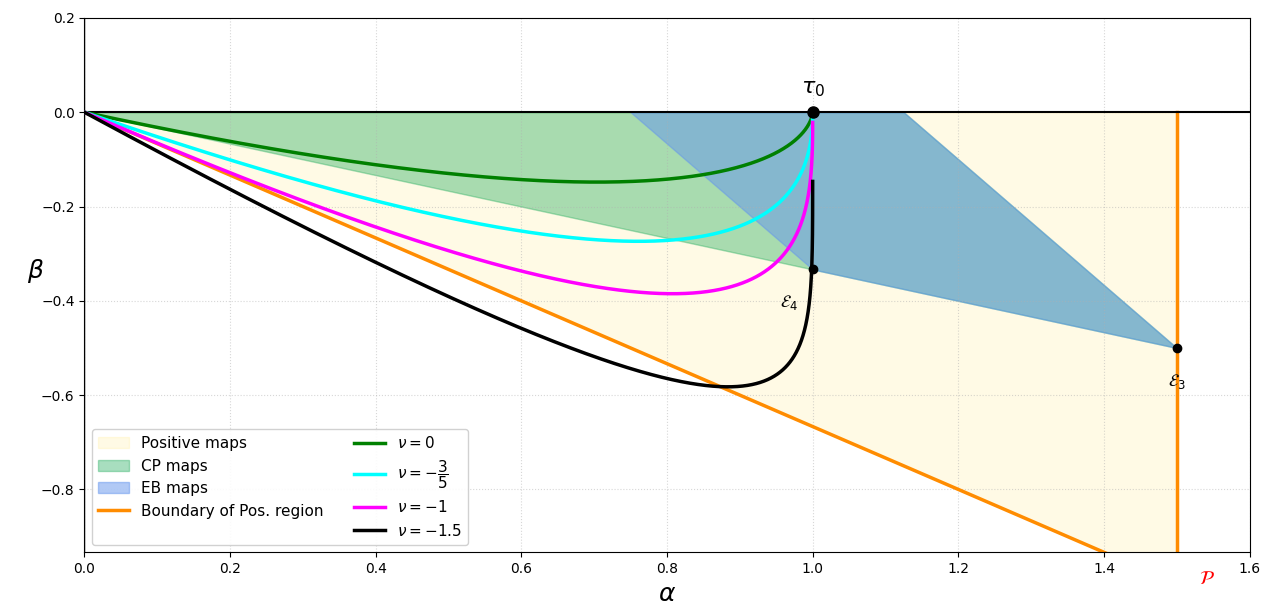} 
    \caption{Trajectories {$(\alpha(t),\beta(t))$} corresponding to various value of the parameter $\nu$.  $d=3$,  $\kappa=1$.}
    \label{Fig-2}
\end{figure}

Fig. \ref{Fig-2} presents trajectories on the $(\alpha,\beta)$ plane for various value of the parameter $\nu$. Note, that for $\nu=0$, the green trajectory is tangent to CP region at $t=0$. Similarly,  for $\nu=-1$, the cyjan trajectory is tangent to positivity region at $t=0$. Taking $\nu = -\dfrac{d}{d+2}$ one finds the blue trajectory tangential to the Schwarz region at $t=0$. Note, that for $\nu < -1$ the trajectory leaves the positivity region for $t>0$ but then return to positivity region at $t_P> 0$, enters CP region at $t_{CP}> t_P$, and eventually enters EB region  at $t=t_{EB}> t_{CP}$.

\section{Relaxation rates}

One easily finds the spectrum of $\mathcal{L}$. Observe that

\begin{equation}
    \mathcal{L}(E_{k\ell}) = [-i(h_k - h_\ell) - \kappa(d-1+\nu)] E_{k\ell} ,
\end{equation}
and the remaining eigenvalues correspond to the spectrum of the matrix $K$:

\[ K = \kappa (\mathbb{J}_d - d \oper_d) ,\]
where $(\mathbb{J}_d)_{k\ell}=1$. Hence, there are additionally $d$ eigenvalues:

\[  \lambda_0=0,\ \lambda_1=\ldots=\lambda_{d-1}= - \kappa d . \]
Hence, the corresponding relaxation rates read as follows:

\[  \Gamma_\ell = \kappa d \ , \ \ell=1,\ldots,d-1 ,\]
and
\[  \Gamma_{ij} = \kappa(d-1+\nu) \ , \ \ i \neq j . \]
The total rates reads

\begin{equation}
    \Gamma = \sum_k \Gamma_k + \sum_{i\neq j} \Gamma_{ij} = \kappa d(d-1)(d+\nu) . 
\end{equation}
It was proved recently \cite{ROPP} (cf. also \cite{JPA-GP,PRL}) that 

\begin{equation}   \label{GG}
    \Gamma_{\rm max} \leq c_d \Gamma ,
\end{equation}
where

\begin{equation}
    c_d = \left\{ \begin{array}{cl} 1 & {\rm semigroup\ of\ positive\ maps} \\ \frac{2}{d+1} & {\rm semigroup\ of\ Schwarz\ maps} \\ \frac 1d & {\rm semigroup\ of} \ k{\rm -positive\ maps}\ \ (k\geq 2) \end{array} \right. . 
\end{equation}
One immediately checks that (\ref{GG}) is satisfied. Moreover, for $\nu \in \{0,-\frac{d}{d+2},-1\}$ the above inequality is saturated only for $d=2$.

\section{Dynamical maps beyond Markovian regime}

Consider now time-dependent generator (\ref{L}), i.e. $h_i=h_i(t)$, $\kappa=\kappa(t)$, and $\nu=\nu(t)$.
One finds

\[ \alpha(t) = 1 - A(t)\ , \quad \beta(t) = A(t) - B(t) ,  \]
with

\[   A(t) =  \exp\left(-d \int_0^t\kappa(\tau) {\rm d} \tau\right) \ , \ \ B(t) = \exp\left(-\int_0^t\kappa(\tau)[d-1+\nu(\tau)] {\rm d} \tau\right) .  \]
For $\kappa(t)=1$ and constant $\nu(t)=\nu$ one recovers (\ref{ab}). It is clear that $\Lambda_t= \exp\left( \int_0^t \mathcal{L}_\tau {\rm d}\tau\right)$ is CP-divisible iff $\kappa(t),\nu(t) \geq 0$ for all $t \geq 0$. However, $\Lambda_t$ may be completely positive for all $t \geq 0$ even if temporally $\nu(t) < 0$. Indeed, setting $\kappa(t)=1$ the condition for complete positivity of $\Lambda_t$ (cf. condition (\ref{CP})) `$- {\alpha(t)} \leq d\beta(t)$'  gives rise to the following condition for the parameter $\nu(t)$:

\[  N(t)\ge t-\ln\!\left(\frac{e^{dt}+d-1}{d}\right) ,\]
where $N(t) = \int_0^t \nu(\tau){\rm d}\tau$. Equivalently

\[ e^{-N(t)}\le \frac{e^{(d-1)t}+(d-1)e^{-t}}{d}. \]
The optimal choice (i.e. when the above inequality is saturated) leads to 

\begin{equation}  \label{n-t}
    \nu(t) = - (d-1) \frac{e^{dt} - 1}{e^{dt} + (d-1)} .
\end{equation}
Hence $\nu(t) < 0$ for all $t>0$ and $\nu(t) \to -(d-1)$ as $t \to \infty$. Note, that for $d=2$ we recover

\begin{equation}   \label{tanh}
    \nu(t) = -\frac{e^{2t}-1}{e^{2t}+1} = - \tanh\, t ,
\end{equation}
analyzed in \cite{Erica} (see also \cite{Nina,Jagadish}). One finds, therefore, the following formula for eternally non-Markovian evolution

\begin{equation}   \label{ENM}
\Lambda_t(X)
=
\frac{1+(d-1)e^{-dt}}{d}\,X
+\frac{e^{-dt}-1}{d}\,\Delta(X)
+\frac{1-e^{-dt}}{d}\,\oper_d\,\mathrm{Tr}X.
\end{equation}
Interestingly, the asymptotic map reads

\begin{equation}
    \Lambda_\infty(X) = \frac 1d \Big(X - \Delta(X) + \oper_d \Tr X \Big) = \mathcal{E}_4(X), 
\end{equation}
and hence this eternally non-Markovian evolution is eventually entanglement breaking and the asymptotic map $\mathcal{E}_4$ belongs now to the boundary of entanglement breaking channels, cf. Fig. \ref{Fig-4}. It shows that this evolution is no longer relaxing to the maximally mixed state but the asymptotic state does depend upon the initial state which is a clear sign of memory effects. It should be stressed that the above evolution is not P-divisible since the necessary condition for P-divisibility, that is, $\nu(t) \geq -1$, is violated. 

\begin{figure}[h!]
    \centering
    \includegraphics[width=11cm]{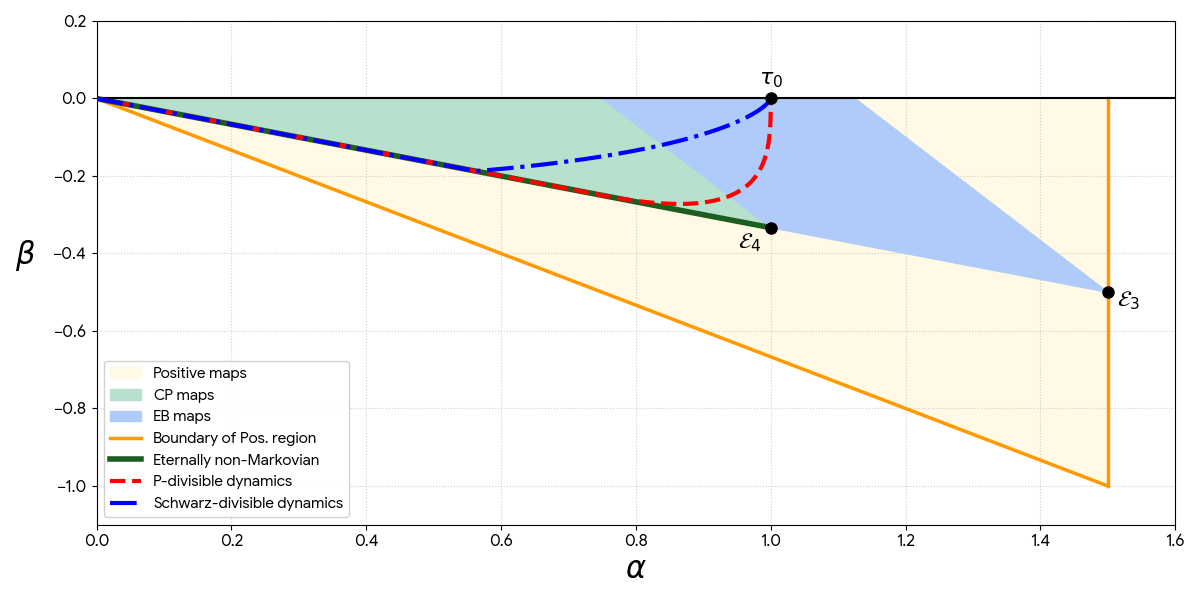} 
    \caption{Eternally non-Markovian evolution stays at the boudary of completely positive maps and approaches asymptotically EB channel $\mathcal{E}_4$. Schwarz-divisible is {\em less non-Markovian} than P-divisible. All three curves correspond to $d=3$.}    \label{Fig-4}
\end{figure}
To construct evolution which is P-divisible (but not CP-divisible) let us consider

\begin{equation}
    \nu(t) = \left\{ \begin{array}{ll} - (d-1) \frac{e^{dt} - 1}{e^{dt} + (d-1)} \ ,& \ t \leq t_* \\
    -1 \ ,& \ t > t_* \end{array} \right. \ , 
\end{equation}
with $t_*=\infty$ for $d=2$, and

\begin{equation} \label{t*}
    t_* = \frac 1d \ln\left( \frac{2(d-1)}{d-2} \right) , 
\end{equation}
for $d > 2$. By construction $\nu(t) \geq -1$ which guarantees P-divisibility. One easily checks that the corresponding $\Lambda_t$ is completely positive

\begin{equation*}
\Lambda_t(X)=
\begin{cases}
\dfrac{1+(d-1)e^{-dt}}{d}\,X
+\dfrac{e^{-dt}-1}{d}\,\Delta(X)
+\dfrac{1-e^{-dt}}{d}\,\oper_d\,\mathrm{Tr}X,
& 0\le t\le t_*,
\\[3ex]
\dfrac12 e^{-(d-2)(t-t_*)}\,X
+\left(e^{-dt}-\dfrac12 e^{-(d-2)(t-t_*)}\right)\Delta(X)
+\dfrac{1-e^{-dt}}{d}\,\oper_d\,\mathrm{Tr}X,
& t\ge t_*,
\end{cases}
\end{equation*}
and the asymptotic map $\Lambda_\infty(X) = \frac 1d \oper_d \Tr X$, i.e. P-divisible evolution relaxes to the maximally mixed state. 

Finally, consider dynamics which is not only P-divisible but also Schwarz-divisible (but non-Markovian):

\begin{equation}
    \nu(t) = \left\{ \begin{array}{ll} - (d-1) \frac{e^{dt} - 1}{e^{dt} + (d-1)} \ ,& \ t \leq t_S \\
    -\frac{d}{d+2} \ ,& \ t > t_S \end{array} \right. \ , 
\end{equation}
with 

\[ t_S = \frac 1d \ln\left( \frac{2(d^2-1)}{d^2-2} \right) . \]
Now, for $t > t_S$ one finds

\begin{equation*}
\Lambda_t(X)=
\dfrac{d+2}{2(d+1)}\,e^{-\frac{d^2}{d+2}(t-t_S)}\,X
+\left(e^{-dt}-\dfrac{d+2}{2(d+1)}\,e^{-\frac{d^2}{d+2}(t-t_S)}\right)\Delta(X)
+\dfrac{1-e^{-dt}}{d}\,\oper_d\,\operatorname{Tr}X .
\end{equation*}
Again, the asymptotic map is a completely depolarizing channel (cf. Fig. \ref{Fig-4}).

\begin{remark}
Note, that for  $d=2$ one finds $t_*=\infty$ and hence the map

\[ \Lambda_t(X)=
\dfrac{1+ e^{-2t}}{2}\,X
+\dfrac{1-e^{-2t}}{2}\,\Big( \oper_2\,\mathrm{Tr}X - \Delta(X) \Big) ,   \]
is P-divisible \cite{Erica}. The asymptotic state reads

\[  \Lambda_\infty(\rho) = \mathcal{E}_4(\rho) = \frac 12 \begin{pmatrix} 1 &  \rho_{12} \\  \rho_{21} & 1 
\end{pmatrix} ,\]
and it does depend on the initial state $\rho$. Moreover, Schwarz-divisible map is already CP-divisible and hence Markovian. It is no longer true for $d > 2$. 
\end{remark}
\begin{remark} For time-dependent $\nu(t)$ relaxation rates are also time-dependent:
\[  \Gamma_\ell = \kappa d \ , \ \ell=1,\ldots,d-1 ,\]
and
\[  \Gamma_{ij}(t) = \kappa [d-1+\nu(t)] \ , \ \ i \neq j . \]
For $\kappa=1$ and $\nu(t)$ defined in (\ref{n-t}) one finds asymptotically  ($t \to \infty$)

\[  \Gamma_\ell = d \ , \ \ \Gamma_{ij}=0 \ , \ \ \Gamma = d(d-1) ,\]
and hence the relation $\Gamma_\ell \leq \dfrac{1}{d} \Gamma$ is violated which is another signature of non-Markovianity.
\end{remark}

\begin{remark} Recall that the eternally non-Markovian dynamical map (\ref{ENM}) belongs to the boundary of the set of completey positive maps and approaches entanglement breaking channel $\mathcal{E}_4$. Consider now the following map

\begin{equation}   \label{ENM-2}
\Lambda_t(X)
= e^{-dt}\,X
+\frac{e^{-dt}-1}{d-1}\,\Delta(X)
+\frac{1-e^{-dt}}{d-1}\,\oper_d\,\mathrm{Tr}X.
\end{equation}
One easily checks that $\Lambda_t$ is CPTP for all $t \geq 0$ and it belongs to the boundary of the set of completely positive maps. One finds the corresponding time-local generator 
with time dependent parameters

\[ \kappa(t) = \frac{d}{d - e^{dt}} \ , \quad \nu(t) = 1 - e^{dt} . \]
Observe that  $\nu(t) < 0$ for $t > 0$  and $\kappa(t)$ is singular at $t_1 = \dfrac{\ln d}{d}$. 
This dynamics is also  eternally non-Markovian since for $t < t_1$ one has $(\kappa(t) > 0,\kappa(t) \nu(t)) < 0$ and for $t > t_1$ one has $(\kappa(t) < 0,\kappa(t)\nu(t) > 0)$. Interestingly,

\[ \Lambda_{t_1} = \mathcal{E}_4 \ , \quad \Lambda_\infty = \mathcal{E}_3 ,\]
that is, at singularity $t_1$ the map becomes entanglement breaking and stays EB for all $t > t_1$ approaching asymptotically EB channel $\mathcal{E}_3$.  For example for $d=2$ one finds

\[  \Lambda_\infty(\rho) = \begin{pmatrix}
    \rho_{11} & 0 \\ 0 & \rho_{00}
\end{pmatrix} ,\]
that is, $\Lambda_\infty(\rho) = \sigma_x \Delta(\rho) \sigma_x$ perfectly decoheres and flips the initial populations. 
\end{remark}

\begin{remark} Observe that for $d=2$ one has
 \[ \sigma_+ X \sigma_- + \sigma_- X \sigma_+ = \frac 12 (\sigma_x X \sigma_x + \sigma_y X \sigma_y) .\]
Now, introducing a set of Weyl matrices 

    \[   W_{k\ell} = Z^k X^\ell \ ; \ \ \ k,\ell=0,1,\ldots,d-1 , \]
by $Z = \sum_{k=0}^{d-1} \omega^k E_{kk}$, and $X|k\> = |k+1\>$, one finds

\[   \sum_{k,\ell=0}^{d-1} W_{k\ell} X W_{k\ell}^\dagger = d \,\oper_d \Tr X , \]
and hence the formula for $\mathcal{L}$ may be rewritten as follows

\begin{equation}   \label{LW}
    \mathcal{L}(X) =  \frac \kappa d \Big( \sum_k\sum_{\ell > 0} ( W_{k\ell} X W_{k\ell}^\dagger - X) + \nu \sum_k ( Z^k X Z^{k\dagger} - X) \Big) . 
\end{equation}
Let $\mathcal{L}_{k\ell}(X) = W_{k\ell} X W_{k\ell}^\dagger - X$ and consider the following CPTP map

\begin{equation}
    \Phi_t = \frac{1}{d(d-1)} \sum_k \sum_{\ell > 0} e^{t \mathcal{L}_{k\ell} } ,
\end{equation}
i.e. a convex combination of Markovian semigroups. The map $\Phi_t$ satisfies $\dot{\Phi}_t = \mathcal{L}_t \Phi_t$, where the time dependent generator $\mathcal{L}_t$ is defined by (\ref{LW}) with time dependent coefficient

\[ \kappa(t) = \frac{\dot f(t)}{f(t)}, \qquad \nu(t) = 1 - \frac{1}{f(t)} , \]
where

\[ f(t) = \frac{1}{d} \sum_{r=0}^{d-1} \exp\!\big(t(\omega^r - 1)\big), \qquad \omega = e^{2\pi i / d} .\]
For $d=2$ one finds $\kappa(t)=1$ and $\nu(t) = -\tanh t$, i.e. one recovers (\ref{tanh}). For $d>2$ the evolution still belongs to the boundary of completely positive maps (green triangle) but $\kappa(t)$ is no longer constant. Moreover, $\kappa(t)$ is highly singular and hence the dynamical map $\Phi_t$ is 
non-invertible \cite{PRL-Angel,Jagadish-2,Kasia}
    
\end{remark}

\section{Conclusions}

In this work, we have analyzed a class of quantum dynamical semigroups for 
$d$-level systems that interpolate between positive, Schwarz, and completely positive evolutions. By introducing a parametrized family of generators, we established clear criteria under which the resulting dynamical maps belong to each of these important classes. In particular, we showed how conditional positivity and complete positivity are governed by simple parameter constraints, allowing for a transparent characterization of the transition between different regimes of quantum dynamics.

A key result of the study is the geometric interpretation of these regions, illustrated through parameter spaces where positivity, complete positivity, and entanglement-breaking behavior can be directly compared (see, e.g., the diagram on page 6). This provides insight into how dynamical maps evolve over time and how physical properties such as entanglement are affected. Notably, we demonstrated that within the considered family, the evolution can naturally become entanglement breaking after a finite time, even when starting from a broader class of positive maps.

Furthermore, we examined relaxation rates and spectral properties of the generators, highlighting how these depend on the dimension and parameters of the system. Extensions beyond the Markovian regime were also discussed, showing that time-dependent generators can lead to non-Markovian dynamics while still preserving certain structural properties. Interestingly, we provided a family of examples of so-called eternally non-Markovian evolutions $\{\Lambda_t\}_{t\geq 0}$ such that for each $t \geq 0$ the map $\Lambda_t$ belongs to the border of the set of completely positive maps. Moreover, it was shown that properly engeneering the parameters $\kappa(t)$ and $\nu(t)$ one may realize dynamics with different asymptotics. 

Overall, the results contribute to a deeper understanding of the hierarchy of quantum maps and their dynamical behavior, offering a useful framework for studying open quantum systems and their transition between physically relevant regimes. It would be interesting to further analyze dynamics with $k$-positive or $k$-entangement breaking maps \cite{Franz,kEB} and to generalize the presented analysis to infinite dimensional scenario (see e.g. recent work \cite{Franz-2}).

\appendix

\section*{Appendix}

\section{Proof of Lemma \ref{L1}}

\begin{proof}
Setting $z_i:=x_i\overline{y_i}$ it is enough to show that
\[
\sum_{i=1}^d |z_i|^2\le \frac12.
\]
Since \(x\) and \(y\) are orthogonal, one has
\[
\sum_{i=1}^d z_i=\sum_{i=1}^d x_i\overline{y_i}=0 ,
\]
and hence
\[
0=\left|\sum_{i=1}^d z_i\right|^2
 =\sum_{i=1}^d |z_i|^2+2\sum_{1\le i<j\le d}\Re\!\bigl(z_i\overline{z_j}\bigr).
\]
Therefore,
\[
\sum_{i=1}^d |z_i|^2
=-2\sum_{1\le i<j\le d}\Re\!\bigl(z_i\overline{z_j}\bigr)
\le 2\sum_{1\le i<j\le d}|z_i||z_j|.
\]
Adding \(\sum_{i=1}^d |z_i|^2\) to both sides gives
\[
2\sum_{i=1}^d |z_i|^2
\le \sum_{i=1}^d |z_i|^2+2\sum_{1\le i<j\le d}|z_i||z_j|
=\left(\sum_{i=1}^d |z_i|\right)^2.
\]
Thus
\[
\sum_{i=1}^d |z_i|^2\le \frac12\left(\sum_{i=1}^d |z_i|\right)^2.
\]
Now, by the Cauchy--Schwarz inequality,
\[
\sum_{i=1}^d |z_i|
=\sum_{i=1}^d |x_i||y_i|
\le \left(\sum_{i=1}^d |x_i|^2\right)^{1/2}
   \left(\sum_{i=1}^d |y_i|^2\right)^{1/2}
=1.
\]
Consequently,
\[
\sum_{i=1}^d |z_i|^2\le \frac12.
\]
\end{proof}

\section{Proof of Proposition \ref{P-MAIN}}

Let us observe that $\mathcal{L}$ may be equivalently written as 

\begin{equation}
    \mathcal{L}(\rho) = -i [H,\rho] + \kappa \Big(\Psi_a(\rho) - (d-a)\rho\Big) ,
\end{equation}
where $a= 1-\nu$
and $\Psi_a : \Md \to \Md$ is a unital trace-preserving map

\begin{equation}
    \Psi_a(\rho) = \oper_d {\rm Tr}\rho - a\, \Delta(\rho) ,
\end{equation}
and $\Delta(\rho) = \sum_{k=1}^d E_{kk} \rho E_{kk}$. Hence, $\mathcal{L}$ is a dissipative generator iff $\Psi_a$ satisfies

\begin{equation}\label{Phi!!}
  \Psi_a(X^\dagger X) \geq \Psi_a(X^\dagger) X + X^\dagger \Psi_a(X) - (d-a) X^\dagger X,
\end{equation}
for all traceless $X \in \Md$. Equivalently, 
\begin{equation}\label{Ea}
M(a,X):=\tr(X^\dagger X)\oper_d +(d-a)X^\dagger X-a\Delta(X^\dagger X)+{a}\big(\Delta(X^\dagger)X+X^\dagger\Delta(X)\big)\geq 0 .
\end{equation}
For any $X\in M_d(\C)$ let
\[
X=D+N,\qquad D:=\Delta(X)={\rm Diag}(\delta_1,\dots,\delta_d),\quad N:=X-D ,
\]
be a decomposition of $X$ into diagonal part $D$ and off-diagonal part $N$. One easily finds

\begin{equation}\label{eq:pinch-key}
\Delta(X^\dagger X)=D^\dagger D+\Delta(N^\dagger N),
\qquad
X^\dagger\Delta(X)+\Delta(X^\dagger)X=2D^\dagger D+N^\dagger D+D^\dagger N.
\end{equation}
First we show that condition $a \leq 1 + \frac{d}{d+2}$ is necessary. Indeed, consider 
\[
X=\begin{pmatrix}1&-c\\ c&-1\end{pmatrix}\oplus 0_{d-2}\qquad(c\in\mathbb{R}),
\]
so $\tr X=0$ and $\D(X)= {\rm Diag}(1,-1,0,\dots,0)$.
A direct computation shows:
\[
X^\dagger X=\begin{pmatrix}1+c^2&-2c\\ -2c&1+c^2\end{pmatrix} \oplus 0_{d-2} ,\quad
\Delta(X^\dagger X)={\rm Diag}(1+c^2,1+c^2,0,\ldots,0) \]
and
\[
X^\dagger\Delta(X)+\Delta(X^\dagger)X=\begin{pmatrix}2&-2c\\ -2c&2\end{pmatrix} \oplus 0_{d-2} .
\]
Inserting into \eqref{Ea} one obtains 
\[ M(a,X) =
\begin{pmatrix}
d+2+(d+2-2a)c^2 & -\,2dc\\[2pt]
-\,2dc & d+2+(d+2-2a)c^2
\end{pmatrix} \oplus O_{d-2}
\]
The smallest eigenvalue of the $2\times 2$ block as a function of a parameter `$c$' reads
\[
\lambda(c)=d+2+(d+2-2a)c^2-2dc.
\]
If $d+2-2a>0$, the minimum over $c\in\mathbb{R}$ occurs at
$c^\star=\dfrac{d}{\,d+2-2a\,}$ and equals
\[
\lambda_{\min}
=d+2-\frac{d^2}{\,d+2-2a\,}
=\frac{2\bigl(2(d+1)-(d+2)a\bigr)}{\,d+2-2a\,}.
\]
Hence $\lambda_{\min}\geq 0$ only if  $a \leq \dfrac{2(d+1)}{d+2}$. Clearly, the upper bound for $a$ is tight. 

To prove that condition $a \leq \dfrac{2(d+1)}{d+2}$ is also sufficient let us observe 

\begin{equation}\label{eq:vMv}
\<y,M(X,a)y\>
=\|{X}\|^2\|y\|^2 + d\|Xy\|^2
-a\left(\|Ny\|^2+\sum_{k=1}^d |y_k|^2\,\|Ne_k\|^2\right) ,
\end{equation}
and hence assuming that  $y \in \mathbb{C}^d$ is normalized

\begin{eqnarray}
    \ip{y}{M(a,X)\,y}
&=& \|D\|_{HS}^2+\|N\|_{HS}^2
  +d\,\|D y\|^2
  +2d\,\re\ip{Ny}{D y} \nonumber \\
&+& (d-a)\|Ny\|^2
  -a\sum_{i=1}^d p_i\,\|\mathrm{col}_i(N)\|^2,   \label{yEy}
\end{eqnarray}
where $p_i = |y_i|^2$ and $\mathrm{col}_i(N)$ stands for $i$th column of the matrix $N$. Indeed,  (\ref{yEy}) follows from direct computation
\begin{align*}
\ip{y}{\Tr(X^\dagger X)I\,y}&= \|D\|_{HS}^2+\|N\|_{HS}^2,\\
\ip{y}{X^\dagger X\,y}&= \|D y\|^2+\|Ny\|^2+2\,\re\ip{Ny}{D y},\\
\ip{y}{\D(X^\dagger X)\,y}&=\|D y\|^2+\sum_i p_i\,\|\mathrm{col}_i(N)\|^2,\\
\ip{y}{X^\dagger\D(X)+\D(X^\dagger)X\,y}&=2\|D y\|^2+2\,\re\ip{Ny}{D y}.
\end{align*}
Now, minimizing the r.h.s. of (\ref{yEy}) w.r.t. the off-diagonal part $N$ one arrives at the following 

\begin{lemma}\label{lem:rowwise}
For any normalized vector $y \in \mathbb{C}^d$
\begin{equation}\label{eq:bound17}
\langle y,M(a,X)y\rangle
\ge
\sum_{i=1}^d |\delta_i|^2\left(
1 + d p_i - \frac{d^2 p_i S_i}{1+(d-a)S_i}
\right),
\end{equation}
where $p_i = |y_i|^2$ and 
\begin{equation}
S_i:=\sum_{j\neq i}\frac{p_j}{1-a p_j}.
\end{equation}
\end{lemma}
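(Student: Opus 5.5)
The plan is to fix the diagonal part $D$ and minimize the right-hand side of (\ref{yEy}) exactly over the off-diagonal part $N$. The key observation is that this minimization decouples \emph{row by row}. Write $N=\sum_{i\neq j} n_{ij}E_{ij}$. The $N$-independent terms are
\[
\|D\|_{HS}^2+d\|Dy\|^2=\sum_i|\delta_i|^2(1+dp_i),
\]
which is exactly the ``$1+dp_i$'' part of (\ref{eq:bound17}). For the remaining terms:
\begin{itemize}
\item Reorganizing the column norms by entries gives
\[
\|N\|_{HS}^2-a\sum_j p_j\|\mathrm{col}_j(N)\|^2=\sum_{i\neq j}(1-ap_j)|n_{ij}|^2 .
\]
\item With $u_i:=(Ny)_i=\sum_{j\neq i}n_{ij}y_j$, one has $\|Ny\|^2=\sum_i|u_i|^2$.
\item Similarly, $\re\ip{Ny}{Dy}=\sum_i\re(\overline{u_i}\,\delta_i y_i)$.
\end{itemize}
Hence the $N$-dependent part equals $\sum_i F_i$, where
\[
F_i=\sum_{j\neq i}w_j|n_{ij}|^2+(d-a)|u_i|^2+2d\,\re(\overline{u_i}\delta_i y_i),\qquad w_j:=1-ap_j ,
\]
and $F_i$ depends only on the $i$th row of $N$.

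Each $F_i$ is then minimized in two stages. First fix the value $u_i$ and minimize $\sum_{j\ne i}w_j|n_{ij}|^2$ subject to $\sum_{j\neq i}n_{ij}y_j=u_i$. By Cauchy--Schwarz,
\[
|u_i|^2\le\Big(\sum_{j\ne i}w_j|n_{ij}|^2\Big)\Big(\sum_{j\neq i}\frac{|y_j|^2}{w_j}\Big),
\]
so the constrained minimum is $|u_i|^2/S_i$, attained at $n_{ij}\propto\overline{y_j}/w_j$. Next minimize the scalar quadratic
\[
\Big(\frac1{S_i}+d-a\Big)|u_i|^2+2d\,\re(\overline{u_i}\delta_iy_i)
\]
over $u_i\in\C$. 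Its minimum is
\[
-\frac{d^2|\delta_i|^2p_i}{1/S_i+d-a}=-\frac{d^2p_iS_i}{1+(d-a)S_i}\,|\delta_i|^2 .
\]
Summing over $i$ and adding the diagonal contribution yields (\ref{eq:bound17}).

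The main obstacle is justifying the positivity needed in these steps: $w_j=1-ap_j>0$ and $1/S_i+d-a>0$. The second follows from the first, since $a<d$ in the relevant range $a\le\frac{2(d+1)}{d+2}<2\le d$. The first holds automatically when $a\le1$, but for $1<a<2$ a weight can vanish or become negative if some $p_j\ge 1/a$. I would first try to show this cannot occur for two distinct indices, since $p_j+p_k\le1<2/a$. Thus at most one $j$ can have $w_j\le 0$. For $i\neq j$ that $j$ appears in $S_i$, and the bound must be read with $S_i$ interpreted appropriately: as $+\infty$ in the limit, where the right-hand side tends to $1+dp_i-\frac{d^2p_i}{d-a}$. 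If $w_j<0$, the row-$i$ minimization is in fact unbounded below, so the lemma should be understood under $ap_j<1$. In that case I would treat the statement as being applied only in the regime where all $w_j>0$, and isolate the boundary case by a separate argument or a limiting procedure.
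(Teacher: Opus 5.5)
Your core argument is the paper's: decouple by rows, minimize over row $i$ of $N$ with $(Ny)_i$ fixed, then minimize over $(Ny)_i$. The paper uses $u_j=N_{ij}y_j$ with weights $1/p_j-a$ and a Lagrange multiplier. You use the entries $n_{ij}$ with weights $w_j=1-ap_j$ and Cauchy--Schwarz, which is slightly cleaner because it never divides by $p_j$. When all $w_j>0$, your proof is complete.

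\textbf{The gap.} It is in your final paragraph. For $1<a\le\frac{2(d+1)}{d+2}$ there are normalized $y$ with $p_j>1/a$. The lemma is needed for these $y$, for two reasons. First, $M(a,X)\ge0$ requires every $y$. Second, the two-coordinate computation after Lemma~\ref{lem:two_coord} uses $S_s=p/(1-ap)$ for all $p\in[0,1]$. You are right that positivity of the weights is an issue, and the paper's Lagrange step assumes it tacitly. But your conclusion is false: the row-$i$ minimization is not unbounded below, and the lemma need not be restricted to $ap_j<1$. Take $d=3$, $a=\frac85$, $y=e_j$, $i\neq j$, and let $k$ be the third index. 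Then row $i$ contributes $(1+d-2a)|n_{ij}|^2+|n_{ik}|^2=\frac45|n_{ij}|^2+|n_{ik}|^2\ge0$. Restricting the lemma would leave an open set of $y$'s unproved, and no limiting argument can recover it.

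\textbf{The formula holds as stated.} Let $p_j>1/a$ with $j\neq i$. Every other weight is positive, since $p_k\le 1-p_j<1-\frac1a<\frac1a$. Put $\varphi(t)=t/(1-at)$. Superadditivity and monotonicity of the convex $\varphi$ on $[0,1/a)$ give $S_i\le\varphi(p_j)+\varphi(1-p_j)$. For $a<2$ this bound is increasing in $p_j\in(1/a,1]$, so $S_i\le\varphi(1)=-\frac{1}{a-1}$. Hence $1+(d-a)S_i\le\frac{2a-d-1}{a-1}<0$, because $a<\frac{d+1}{2}$. (For $d=2$, $a=\frac32$, equality forces $p_i=0$, and the row term vanishes.)

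Now write the row-$i$ contribution as $n^*An+2\re(n^*g\,\beta)$, with $W=\mathrm{diag}(w_k)_{k\neq i}$, $g=(\overline{y_k})_{k\neq i}$, $A=W+(d-a)gg^*$ and $\beta=d\,\delta_iy_i$.
\begin{itemize}
\item Since $g^*W^{-1}g=S_i$, the matrix determinant lemma gives $\det A=\det W\,(1+(d-a)S_i)$. Both factors are negative, so $\det A>0$.
\item $W$ has a single negative eigenvalue and $(d-a)gg^*\ge0$, so by Weyl's inequality $A$ has at most one negative eigenvalue. Together with $\det A>0$, this gives $A>0$.
\item Sherman--Morrison gives $g^*A^{-1}g=\frac{S_i}{1+(d-a)S_i}$. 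So the exact minimum is $-|\beta|^2g^*A^{-1}g=-\frac{d^2p_iS_i}{1+(d-a)S_i}|\delta_i|^2$, which is precisely the lemma's term.
\end{itemize}
Your own two-stage scheme also survives. Since $W=A$ on $\{g^*n=0\}$, $W$ is positive definite there. So the constrained minimum is still $|u_i|^2/S_i$, now negative, attained at $n_{ik}\propto\overline{y_k}/w_k$. Moreover $\frac1{S_i}+d-a=\frac{1+(d-a)S_i}{S_i}>0$. Only the Cauchy--Schwarz justification must be replaced, for example by this stationary-point argument.

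At $p_j=1/a$ your limiting value $-\frac{d^2p_i}{d-a}|\delta_i|^2$ is correct, because $A$ is still positive definite there.
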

\begin{proof}
    Fix a row index $i$ and set
$u_j:=N_{ij}y_j$ for $j\ne i$, and $v_i:=\sum_{j\ne i}u_j=(Ny)_i$.
The contribution from the matrix $N$  to the r.h.s. of \eqref{yEy} reads
\[
n_i = \sum_{j\ne i}\Bigl(\tfrac{1}{p_j}-a\Bigr)|u_j|^2
\;+\;(d-a)|v_i|^2
\;+\;2d\,\re\!\bigl(v_i\,\overline{\delta_i y_i}\bigr).
\]
For fixed $v_i$, the minimum of $\sum_{j\ne i}(\frac{1}{p_j}-a)|u_j|^2$
subject to $\sum_j u_j=v_i$ (Lagrange multiplier method) reads $|v_i|^2/S_i$, with
 $S_i=\sum_{j\ne i}\frac{p_j}{1-a p_j}$.
Minimizing the formula for $n_i$ over $v_i\in\C$ then gives
\[ n_i = 
-\,\frac{d^2\,|\delta_i y_i|^2\,S_i}{\,1+(d-a)S_i\,}
=-\,\frac{d^2\,p_i\,|\delta_i|^2\,S_i}{\,1+(d-a)S_i\,}.
\]
Summing over $i=1,\ldots,d$ finally proves \eqref{eq:bound17}.
\end{proof}

\begin{lemma}[Two--coordinate reduction]\label{lem:two_coord} 
A supremum of $S_i$ over all
$(p_j)_{j\neq i}$ with fixed sum $T_i := \sum_{j\neq i}p_j=1-p_i$ is attained on a vector $(p_i)$ with at most
two non-zero coordinates $p_j$ and $p_k$, with  $j,k\neq i$:

\begin{itemize}
\item if $T_i\le 1/a$, the maximum is attained when $p_j=T_i$ for a single $j\neq i$ and
      $p_\ell=0$ for all $\ell\neq i,j$;
\item if $T> 1/a$, then $\sup S_i=+\infty$, and one can make $S_i$ arbitrarily large by taking
      $p_j\to(1/a)^-$ and $p_k=T_i-p_j$, with all other $p_\ell=0$.
\end{itemize}
Consequently, since $S\mapsto \frac{S}{1+(d-a)S}$ is increasing on $[0,+\infty]$, to minimize the
right-hand side of \eqref{eq:bound17} over $y$ with $\|y\|^2=1$ it suffices to consider vectors $y$
supported on two coordinates.
\end{lemma}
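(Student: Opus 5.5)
Our plan is to study the function $g(t):=\frac{t}{1-at}$ on $[0,1/a)$, extended by $g=+\infty$ for $t\ge 1/a$. For $0<a\le 2(d+1)/(d+2)<2$, the summands $p_j$ lie in $[0,1]$, so some of them may exceed $1/a$. We restrict to the admissible region in which every $p_j<1/a$; otherwise $S_i=+\infty$ outright. On $[0,1/a)$ one computes
\[
g'(t)=\frac{1}{(1-at)^2},\qquad g''(t)=\frac{2a}{(1-at)^3}>0 ,
\]
so $g$ is strictly convex with $g(0)=0$. In particular $g$ is superadditive: $g(s)+g(t)\le g(s+t)$ whenever $s+t<1/a$. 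This follows from convexity together with $g(0)=0$, since then $g(t)/t$ is nondecreasing.

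\emph{Case $T_i\le 1/a$.} We maximize $S_i=\sum_{j\ne i}g(p_j)$ over the simplex $\{p_j\ge 0,\ \sum_{j\ne i}p_j=T_i\}$. A convex function attains its maximum over a simplex at a vertex, and the vertices here are the points $p_j=T_i$ with all other $p_\ell=0$. If $T_i<1/a$ this gives the finite maximum $g(T_i)=T_i/(1-aT_i)$. Equivalently, repeated use of superadditivity merges mass into one coordinate without decreasing $S_i$. If $T_i=1/a$ exactly, the vertex gives $S_i=+\infty$, which is consistent with calling it the maximizer. The one point to handle is that interior points are finite while the vertex is infinite; we treat this with the extended value convention.

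\emph{Case $T_i>1/a$.} Here we need $d\ge 3$ so that at least two indices $j,k\ne i$ exist; for $d=2$ the sum $S_i$ has a single term $g(T_i)=+\infty$ and the claim is trivial. Take $p_j=1/a-\varepsilon$ and $p_k=T_i-p_j$. Since $T_i\le 1$ and $a>1$ whenever the regime is nontrivial, we get $p_k<1/a$ once $\varepsilon$ is small, because $T_i-1/a<1/a$ follows from $T_i\le 1<2/a$. Then $S_i\ge g(1/a-\varepsilon)\to+\infty$. The main care point is to check that $p_k$ stays admissible, which uses $a<2$.

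\emph{Consequence.} The right-hand side of \eqref{eq:bound17} depends on $S_i$ only through the term $-\frac{d^2p_iS_i}{1+(d-a)S_i}$. The map $S\mapsto S/(1+(d-a)S)$ is increasing, since its derivative is $1/(1+(d-a)S)^2>0$ and $d-a>0$; it tends to $1/(d-a)$ as $S\to\infty$. So each coefficient is minimized by making $S_i$ as large as possible, which by the two cases above means concentrating the remaining mass on at most two coordinates. Because the bound is a sum over $i$ weighted by $|\delta_i|^2$, we argue separately for each term: for a fixed $i$, the worst case for that term is realized by $y$ supported on $i$ and at most two further coordinates. After this, the lower bound reduces to an explicit finite-dimensional minimization in $p_i$ and one or two other weights.

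The step we expect to be the main obstacle is passing from the termwise statement to a single $y$ that is simultaneously worst for all $i$. We would handle it by noting that the bound only needs to be shown nonnegative. Each coefficient is bounded below by its value at the individually optimal concentrated configuration, so it suffices to prove nonnegativity of
\[
1+dp-\frac{d^2p\,S(p)}{1+(d-a)S(p)},\qquad S(p)=\sup S_i ,
\]
uniformly in $p\in[0,1]$. This is a two-coordinate, one-variable check.
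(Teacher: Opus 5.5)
Your handling of the two bullets is the paper's argument: convexity of $t\mapsto t/(1-at)$, the maximum of a convex function on the simplex sitting at a vertex, and blow-up as $p_j\to(1/a)^-$, with some extra care. You are also right that the last sentence of the lemma does not follow from termwise monotonicity alone, because one $y$ fixes all $S_i$ at once; the paper's proof asserts it in one line. The genuine gap is your replacement for that step. The function $\phi(p)=1+dp-\frac{d^2p\,S(p)}{1+(d-a)S(p)}$ is \emph{not} nonnegative on the range you need. For $p>1-\frac1a$ one has $S(p)=\frac{1-p}{1-a(1-p)}$, hence $\phi(p)=1+dp-\frac{d^2p(1-p)}{1+(d-2a)(1-p)}$. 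As $p\downarrow 1-\frac1a$ (where $S(p)\to+\infty$),
\[
\phi(p)\longrightarrow \frac{2d-a(d+1)}{d-a},
\]
which is negative as soon as $a>\frac{2d}{d+1}$. Since $\frac{2d}{d+1}<\frac{2(d+1)}{d+2}$, the check fails at the top of the range. The limit is $-1$ for $d=2$, $a=\frac32$ (the qubit Schwarz threshold $\nu=-\frac12$), and $-\frac27$ for $d=3$, $a=\frac85$. Termwise nonnegativity can reach at most $a\le\frac{2d}{d+1}$.

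The structural reason is that your argument never uses $\Tr X=0$, i.e.\ $\sum_i\delta_i=0$. It would therefore prove $M(a,X)\ge0$ in \eqref{Ea} for every $X$. But for the non-traceless $X=E_{11}+cE_{12}$ with $d=2$, $a=\frac32$,
\[
M(a,X)=\begin{pmatrix}3+c^2&2c\\2c&1\end{pmatrix},\qquad \det M(a,X)=3-3c^2<0\ \text{ for } |c|>1 .
\]
A correct argument must let a negative coefficient at one index be compensated by the others through $\sum_i\delta_i=0$. This is what the paper does after restricting to $y$ supported on $\{r,s\}$, via $\sum_{i\ne r,s}|\delta_i|^2\ge\frac{|\delta_r+\delta_s|^2}{d-2}$ and the trace/determinant check of the $2\times2$ matrix $M(p)$. (For $d=2$, $\delta_2=-\delta_1$, and only the sum of the two coefficients has to be nonnegative.)

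Replacing each $S_i$ by its own supremum is also too lossy even after tracelessness is added. For $d=3$, $a=\frac85$, $p=(\frac25,\frac25,\frac15)$ and $\delta=(1,-1,0)$ it gives $2\phi(\frac25)=-\frac{28}{55}<0$, whereas the actual right-hand side of \eqref{eq:bound17} at that $y$ is positive.

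Finally, the convention ``$S_i=+\infty$ outright'' when some $p_j>1/a$ is not what the row-wise minimization produces. There $S_i=\sum_{j\ne i}\frac{p_j}{1-ap_j}$ is finite and negative, and $\frac{S_i}{1+(d-a)S_i}$ can exceed $\frac{1}{d-a}$ (about $1.11>\frac57$ for $d=3$, $a=\frac85$, $p_j=\frac{9}{10}$, $p_i=p_k=\frac1{20}$). So $\sup S_i$ is not even the worst case for such $y$.

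What is missing is a genuinely joint-in-$y$ reduction that preserves the coupling between different indices $i$.
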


\begin{proof}
Let $\varphi(t):=\frac{t}{1-a t}$. On $[0,1/a)$ we have
\[
\varphi'(t)=\frac{1}{(1-a t)^2}>0,\qquad
\varphi''(t)=\frac{2a}{(1-a t)^3}>0,
\]
so $\varphi$ is increasing and convex on that interval, and has a vertical asymptote at $t=1/a$.
If $T_i\le 1/a$, then $\sum_{j\neq i}\varphi(p_j)$ is maximized over the simplex
$\{(p_j)_{j\neq i}\ge0:\sum_{j\neq i}p_j=T_i\}$ at an extreme point, i.e.\ when one coordinate equals $T_i$
and the rest are $0$. If $T>1/a$, we may take one coordinate $p_j\to(1/a)^-$ while keeping
$\sum_{j\neq i}p_j=T_i$, and then $\varphi(p_j)\to+\infty$, hence $\sup S_i=+\infty$.
The final claim follows because in \eqref{eq:bound17} each term is decreasing in $S_i$ through the increasing
factor $\frac{S_i}{1+(d-a)S_i}$.
\end{proof}
We may assume $a<2$ (this holds in the range $a\le \frac{2(d+1)}{d+2}<2$ of the theorem).
Fix $X=\Delta+N$ and $y\in\mathbb C^d$, let $\alpha=\|y\|^2$, and apply the lower bound \eqref{eq:bound17}.
By Lemma~\ref{lem:two_coord}, it suffices to take $y$ supported on two indices, say $\{r,s\}$.
By homogeneity in $y$ we set $\alpha=1$ and write
\[
p:=p_r\in[0,1],\qquad p_s=1-p,\qquad t:=p(1-p)\in\Bigl[0,\frac14\Bigr].
\]

For $i\notin\{r,s\}$ we have $p_i=0$, hence the $i$--term in \eqref{eq:bound17} equals $|\delta_i|^2$.
For $i=r,s$ we use the extremal $S_i$ from Lemma~\ref{lem:two_coord}, namely
\[
S_r=\frac{1-p}{1-a(1-p)},\qquad S_s=\frac{p}{1-a p}.
\]
A direct simplification gives
\[
\frac{p\,S_r}{1+(d-a)S_r}=\frac{t}{1+(d-2a)(1-p)},\qquad
\frac{(1-p)\,S_s}{1+(d-a)S_s}=\frac{t}{1+(d-2a)p}.
\]
Therefore
\begin{equation}\label{eq:F_t}
\frac{p\,S_r}{1+(d-a)S_r}+\frac{(1-p)\,S_s}{1+(d-a)S_s}
=\frac{(d+2-2a)\,t}{(d+1-2a)+(2a-d)^2\,t}.
\end{equation}
Inserting into \eqref{eq:bound17} yields
\begin{eqnarray}\label{delta_rs}
\langle y,M(a,X)y\rangle
&\geq&
|\delta_r|^2\Bigl(1+d p\Bigr)+|\delta_s|^2\Bigl(1+d(1-p)\Bigr)\nonumber \\[2mm]
&&-d^2 t\Biggl(\frac{|\delta_r|^2}{1+(d-2a)(1-p)}
+\frac{|\delta_s|^2}{1+(d-2a)p}\Biggr)
+\sum_{i\neq r,s}|\delta_i|^2.
\end{eqnarray}

Since $\Tr X=0$ and $\Tr N=0$ (because $\Delta N=0$), we have $\sum_{i=1}^d \delta_i=0$.
If $d=2$ then necessarily $\delta_s=-\delta_r$ and the argument below reduces to the $2\times2$ case.
Assume $d\ge3$. By Cauchy--Schwarz,
\[
\sum_{i\neq r,s}|\delta_i|^2\ \ge\ \frac1{d-2}\Bigl|\sum_{i\neq r,s}\delta_i\Bigr|^2
=\frac{|\delta_r+\delta_s|^2}{d-2}.
\]
Using this in \eqref{delta_rs}, we obtain a $2\times 2$ Hermitian quadratic form in $(\delta_r,\delta_s)$:
\[
\langle y,M(a,X)y\rangle
\ \ge\
\begin{bmatrix}\delta_r&\delta_s\end{bmatrix}
M(p)\begin{bmatrix}\bar \delta_r\\ \bar \delta_s\end{bmatrix},
\]
where
\[
M(p):=
\begin{pmatrix}
c_r(p)+\frac1{d-2} & \frac1{d-2}\\[0.6ex]
\frac1{d-2} & c_s(p)+\frac1{d-2}
\end{pmatrix},
\]
and
\[
c_r(p):=1+d p-d^2\frac{p\,S_r}{1+(d-a)S_r},\qquad
c_s(p):=1+d(1-p)-d^2\frac{(1-p)\,S_s}{1+(d-a)S_s}.
\]
It is, therefore, clear that to complete the proof it is sufficient to prove the following 
\begin{lemma}   \label{lemma-F}
    If $a\leq \frac{2(d+1)}{d+2}$, then $M(a,X) \geq 0$.
\end{lemma}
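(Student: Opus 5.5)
The plan is to show that the $2\times 2$ matrix $M(p)$ is positive semidefinite for every $p\in[0,1]$ when $a\le \frac{2(d+1)}{d+2}$. The reduction preceding the statement already gives $\langle y,M(a,X)y\rangle\ge [\delta_r,\delta_s]\,M(p)\,[\bar\delta_r,\bar\delta_s]^T$ for every normalized $y$ supported on two coordinates. By Lemma \ref{lem:two_coord}, two-coordinate $y$ are the worst case, so this yields $M(a,X)\ge0$.

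The first step is to write $c_r,c_s$ explicitly using the simplified expressions above \eqref{eq:F_t}. Put $q=1-p$ and $\gamma:=d-2a$; note $\gamma>-1$ in the allowed range, so denominators stay positive. Then
\[
c_r(p)=1+dp-\frac{d^2 pq}{1+\gamma q},\qquad c_s(p)=1+dq-\frac{d^2pq}{1+\gamma p}.
\]
Positivity of $M(p)$ is equivalent to three conditions: $c_r+\frac1{d-2}\ge0$, $c_s+\frac1{d-2}\ge0$, and
\[
\det M(p)=c_rc_s+\frac{c_r+c_s}{d-2}\ge 0 .
\]
The $d=2$ case is handled separately: there $\delta_s=-\delta_r$ and the condition is just $c_r+c_s\ge0$. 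By \eqref{eq:F_t}, $c_r+c_s=2+d-d^2\frac{(d+2-2a)t}{(d+1-2a)+(2a-d)^2t}$. Since this is decreasing in $t$, it suffices to check it at $t=1/4$, where it reduces exactly to $a\le 3/2=\frac{2(d+1)}{d+2}$.

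For $d\ge3$, the second step is to show that the extremal point is $p=1/2$, matching the counterexample $X$ in the necessity part, which has symmetric $y=(1,\pm1)/\sqrt2$. At $p=1/2$ one has $c_r=c_s=c:=1+\frac d2-\frac{d^2}{2(2+\gamma)}$. The eigenvalues of $M(1/2)$ are then $c$ and $c+\frac2{d-2}$. The condition $c\ge0$ reads $(d+2)(d+2-2a)\ge d^2$, i.e. precisely $a\le\frac{2(d+1)}{d+2}$, so $p=1/2$ reproduces the bound. Monotonicity in $a$ is also needed: the terms $\frac{pq}{1+\gamma q}$ increase as $\gamma$ decreases, so $c_r,c_s$ decrease with $a$. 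Consequently it suffices to prove $M(p)\ge0$ at the critical value $a_*=\frac{2(d+1)}{d+2}$, where $\gamma_*=d-2a_*=\frac{d^2-2d-4}{d+2}$.

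The main obstacle is the third step: proving $\det M(p)\ge0$, together with positivity of the diagonal entries, for all $p\in[0,1]$ at $a=a_*$. Here $c_r,c_s$ need not be individually nonnegative near the endpoints, and the determinant is a rational function of $p$. I would first use the symmetry $p\leftrightarrow q$ and substitute $t=pq$ together with $s=p-q$, so that $s^2=1-4t$. Then I would clear the positive denominators $(1+\gamma p)(1+\gamma q)=1+\gamma+\gamma^2 t$. This turns the determinant condition into a polynomial inequality $P_d(t)\ge0$ on $t\in[0,1/4]$, with $P_d(1/4)=0$ and $P_d(0)>0$, since at $p=0$ the entries are $c_r=1$ and $c_s=1+d$.

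I would then aim to factor $P_d(t)=(1-4t)\,Q_d(t)$ and show that $Q_d$ has nonnegative coefficients, or at least is positive on $[0,1/4]$, for every $d\ge3$. If that factorization is too messy, the fallback is a crude bound. Since $c_r+c_s\ge 2c(1/2)\ge0$ by the $t$-monotonicity from \eqref{eq:F_t}, it is enough to show that $c_rc_s\ge -\frac{c_r+c_s}{d-2}$. I would try to verify this using the estimate $c_r,c_s\ge 1-\frac{d^2pq}{1+\gamma\min(p,q)}$, checked separately on $t\le t_0$ and $t\ge t_0$ for a suitable threshold $t_0$.
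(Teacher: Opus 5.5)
\paragraph{Verdict: the key step is missing.} Your reduction is the same as the paper's. You prove positivity of the $2\times 2$ matrix $M(p)$, bound the trace via \eqref{eq:F_t} and monotonicity in $t$, and identify $p=\frac12$ as the extremal point. Your two additions are sound:
\begin{itemize}
\item the separate $d=2$ case, where the paper's $M(p)$ with its $\frac1{d-2}$ entries is not even defined;
\item the reduction to $a=a_*$, using that $c_r,c_s$ decrease in $a$ while $1+\gamma q>0$.
\end{itemize}
The gap is that the only nontrivial step, $\det M(p)\ge 0$ for all $t\in[0,\frac14]$, is announced but never proved. What is missing is the explicit evaluation of the determinant. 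With $N_r=1+\gamma q+dp-2adt$ and $N_s=1+\gamma p+dq-2adt$ one has $c_r=N_r/(1+\gamma q)$, $c_s=N_s/(1+\gamma p)$, and $N_rN_s=(1+\gamma-2adt)(1+d-2adt)+4a^2t$. Hence
\[
(d-2)\,(1+\gamma+\gamma^2t)\,\det M(p)
=(d-2)N_rN_s+(d+2)(1+\gamma+\gamma^2t)-d^2(2+\gamma)\,t
=d^2(1-2at)\bigl((d+1-2a)-2a(d-2)t\bigr).
\]
This is exactly the closed form the paper uses. The factor $1-2at$ is positive because $a<2$ and $t\le\frac14$. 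The second factor is linear in $t$: it equals $d+1-2a\ge 0$ at $t=0$ and $\frac{2(d+1)-(d+2)a}{2}\ge 0$ at $t=\frac14$, so it is nonnegative on the whole interval.

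\paragraph{Your factorization guess and the fallback.} At $a=a_*$ your guessed factorization is correct: the numerator equals $\frac{d^2(d+1)(d-2)}{d+2}(1-4t)(1-2a_*t)$. However, $Q_d\propto 1-2a_*t$ has a negative coefficient. Its positivity on $[0,\frac14]$ comes instead from $1-\frac{a_*}{2}=\frac{1}{d+2}>0$.

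Your fallback cannot close the argument, for three reasons:
\begin{itemize}
\item At $a=a_*$ the determinant vanishes at $t=\frac14$. Any estimate that loses something at $p=\frac12$ therefore fails near $t=\frac14$, and dropping the $dp$ term costs $\frac d2$ in each diagonal entry there.
\item For $d=3$ one has $\gamma_*=-\frac15<0$, so the inequality $1+\gamma q\ge 1+\gamma\min(p,q)$ goes the wrong way.
\item $c_r$ can be negative (e.g.\ $d=3$, $a=a_*$, $p=0.2$ gives $c_r\approx-0.11$). Lower bounds on $c_r$ and $c_s$ then do not yield a lower bound on the product $c_rc_s$.
\end{itemize}
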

\begin{proof}
    Since $M(p)$ is $2\times2$ Hermitian, it is positive semidefinite iff $\Tr M(p)\ge0$ and $\det M(p)\ge0$.

First, using \eqref{eq:F_t} we get
\begin{eqnarray*}
\Tr M(p)&=&c_r(p)+c_s(p)+\frac{2}{d-2}\\[2mm]
&=&2+d-d^2\cdot\frac{(d+2-2a)\,t}{(d+1-2a)+(2a-d)^2 t}+\frac{2}{d-2}.
\end{eqnarray*}
The map $t\mapsto \frac{t}{A+Bt}$ is increasing on $[0,\frac14]$ whenever $A\ge0$ and $B\ge0$.
Here $A=d+1-2a\ge0$ in the range $a\le \frac{2(d+1)}{d+2}$ and $B=(2a-d)^2\ge0$.
Hence the negative term is maximized at $t=\frac14$, so
\begin{eqnarray*}
c_r(p)+c_s(p)\ &\geq&\ 2+d-\frac{d^2}{d+2-2a}\\[2mm]
&=&\frac{2\bigl(2(d+1)-(d+2)a\bigr)}{d+2-2a}\\
&\geq& 0,
\end{eqnarray*}
and therefore $\Tr M(p)\ge \frac{2}{d-2}\ge0$.

Second, one computes (using $t=p(1-p)$) the determinant in closed form:
\[
\det M(p)
=
\frac{d^2\,(1-2a t)\bigl((d+1-2a)-2a(d-2)t\bigr)}
{(d-2)\bigl((d+1-2a)+(2a-d)^2 t\bigr)}.
\]
For $t\in[0,\frac14]$ and $a<2$ we have $1-2a t\ge 1-\frac a2>0$.
Moreover, the factor $(d+1-2a)-2a(d-2)t$ is minimized at $t=\frac14$, where it equals
\[
(d+1-2a)-\frac{a}{2}(d-2)=\frac{2(d+1)-(d+2)a}{2}\ \ge\ 0
\quad\text{for }\ a\le \frac{2(d+1)}{d+2}.
\]
Finally, the denominator is positive on $[0,\frac14]$ when $d+1-2a\ge0$.
Thus $\det M(p)\ge0$ in the stated range.

Hence $M(p)\geq 0$, so $\langle y,M(a,X)y\rangle\ge0$ for all $y$ and all $X$ with $\Tr X=0$.
This proves $M(a,X)\geq 0$ for $a\le \frac{2(d+1)}{d+2}$.

\end{proof}

\section{Proof of Proposition \ref{Pro-EB}}

The Choi matrix of $\Phi_{\alpha,\beta}$ belongs to a class of bipartite operators 

\[
\rho_{p,q}:=(1-p-q)P^+ + p\,\oper_d\otimes \oper_d + q\,D,
\qquad
D:=\sum_{k=1}^d E_{kk}\otimes E_{kk},
\]
where
\[
P^+=|\Omega\rangle\langle\Omega|,
\qquad
|\Omega\rangle=\frac{1}{\sqrt d}\sum_{i=1}^d |ii\rangle.
\]
One has that $\rho_{p,q} \geq 0$ iff

\[
p\ge 0
\qquad\text{and}\qquad
p+q\ge 0.
\]

\begin{lemma} Let $\rho_{p,q} \geq 0$. $\rho_{p,q}$ is separable if and only if it is PPT.
\end{lemma}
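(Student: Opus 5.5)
The ``only if'' direction is immediate: every separable state is PPT. For the converse, I will compute the partial transpose of $\rho_{p,q}$, read off the PPT region, and then show that every point of that region is a convex combination of explicitly separable states of the same family.

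\textbf{Partial transpose.} Note that $P^{+\,\Gamma}=\frac1d \mathbb{F}$, where $\mathbb{F}$ is the flip operator, while $\oper_d\otimes\oper_d$ and $D$ are invariant under $\Gamma$. Hence
\[
\rho_{p,q}^\Gamma=\frac{1-p-q}{d}\,\mathbb{F}+p\,\oper_d\otimes\oper_d+q\,D .
\]
This operator splits into blocks.
\begin{itemize}
\item On $|kk\rangle$ the eigenvalue is $\frac{1-p-q}{d}+p+q$.
\item On each two-dimensional space ${\rm span}\{|k\ell\rangle,|\ell k\rangle\}$ with $k<\ell$, the flip acts as $\sigma_x$, so the eigenvalues are $p\pm\frac{1-p-q}{d}$.
\end{itemize}
Together with positivity ($p\ge0$, $p+q\ge0$), PPT is therefore equivalent to three linear inequalities:
\[
p\ge\Big|\tfrac{1-p-q}{d}\Big|,\qquad (d-1)(p+q)+1\ge0 .
\]
I will also use the normalization $\Tr\rho_{p,q}=(1-p-q)+pd^2+qd$ when writing convex combinations, rescaling as needed. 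The PPT set is thus a polygon in the $(p,q)$ plane, in fact a triangle or quadrilateral cut out by these lines.

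\textbf{Separability of the vertices.} Since separable states form a convex set and the family is linear in $(p,q)$, it suffices to show that each vertex of this polygon (after normalization) is separable. I expect the vertices to be of the following types.
\begin{itemize}
\item Points with $1-p-q=0$, i.e.\ combinations of $\oper\otimes\oper$ and $D$ with nonnegative weights. These are manifestly separable, since $D=\sum_k E_{kk}\otimes E_{kk}$ is a sum of product projectors.
\item The point where $p=\frac{1-p-q}{d}$ meets the boundary $p+q=0$ or the line $(d-1)(p+q)+1=0$. Here I will twirl a product state. Average $|\psi\rangle\langle\psi|\otimes|\bar\psi\rangle\langle\bar\psi|$ over $\psi=d^{-1/2}\sum_k e^{i\theta_k}|k\rangle$ with random phases $\theta_k$. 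This produces a combination of $P^+$-type coherences $\sum_{k\ne\ell}|kk\rangle\langle\ell\ell|$ plus diagonal terms, which lies exactly in the span of $P^+$, $D$ and $\oper\otimes\oper$ restricted to the relevant diagonal. Mixing in local diagonal product states $|k\rangle\langle k|\otimes|\ell\rangle\langle \ell|$ ($k\ne\ell$) then completes it to the form $\rho_{p,q}$.
\item The vertex on $p=-\frac{1-p-q}{d}$. This is handled the same way using $|\psi\rangle\otimes|\psi\rangle$-type twirls with $\theta$ replaced so that the coherence sign flips, or equivalently by the local unitary $Z^k\otimes\oper$ averaging combined with the $\mathbb{F}$ symmetry.
\end{itemize}

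\textbf{Main obstacle.} The hardest step is matching the twirled product states exactly to the extreme PPT points. The phase-averaged state has $\frac1d(\cdot)$ diagonal weights that must combine with $\sum_{k\neq\ell}E_{kk}\otimes E_{\ell\ell}=\oper\otimes\oper-D$ to reproduce precisely the vertex $(p,q)$. I will first identify the vertices explicitly, solving the linear equalities pairwise. I will then write each vertex as $c_1\,\overline{(\psi\bar\psi)}+c_2(\oper\otimes\oper-D)+c_3 D$ and check that $c_i\ge0$.

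If some vertex resists this decomposition, the fallback is a different route. That vertex is invariant under the group $\{U\otimes\bar U: U \text{ diagonal unitary}\}\rtimes$ permutations. I will then invoke the known result that for such (``diagonal-symmetric'') states PPT is equivalent to separability, proven by the same twirling-of-product-states argument.
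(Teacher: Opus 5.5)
Your partial-transpose computation is right. For $t:=1-p-q\ge 0$ your argument is the paper's: the phase twirl of $\psi\otimes\bar\psi$ is proportional to $S:=Q+dP^+$, where $Q:=\oper_d\otimes\oper_d-D$, and $\rho_{p,q}=\frac td S+(p-\frac td)Q+(p+q)D$. The gap is the side $t<0$. In the $(p,q)$-plane the PPT set is not a polygon. It is the unbounded region $\{p\ge0,\ q\ge -p,\ 1-(d+1)p\le q\le 1+(d-1)p\}$, with finite vertices $(0,1)$ and $(\frac1d,-\frac1d)$ and recession directions $Q$ and
\[
W:=\oper_d\otimes\oper_d+(d-1)D-dP^+ .
\]
For example, $\rho_{p,1+(d-1)p}=D+pW$ is PPT for every $p\ge 0$. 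After normalisation, $W$ is exactly the vertex you place on the face $p=-\frac{1-p-q}{d}$. It is all that is missing: for $t\le 0$ one has $\rho_{p,q}=\frac{|t|}{d}W+(p-\frac{|t|}{d})Q+D$, with nonnegative coefficients under PPT. The case is genuinely needed, since $W/(2(d-1))$ is precisely the Choi matrix of $\mathcal{E}_2$ in Proposition~\ref{Pro-EB}.

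Your construction for this vertex fails for $d\ge 3$. Twirling $\psi\otimes\psi$ gives $d^{-2}(\oper_d\otimes\oper_d+\mathbb{F}-D)$, whose coherences sit on $|k\ell\rangle\langle \ell k|$, so it is not in the family. Twirling $\psi\otimes\bar\psi'$ with flat vectors gives coherences $d^{-2}G_{km}$ on $|kk\rangle\langle mm|$, where $G$ is a unit-diagonal positive semidefinite matrix. A common off-diagonal value is therefore at least $-\frac{1}{d-1}$, and the best you reach is $\propto W+(d-2)Q$. Writing $W=[W+(d-2)Q]-(d-2)Q$ then forces $c_2<0$.

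Choosing phases more cleverly does not help. Suppose $W=\sum_j|a^j\rangle\langle a^j|\otimes|b^j\rangle\langle b^j|$ and set $c^j_k:=a^j_kb^j_k$. For $k\ne m$ we have $\langle kk|W|mm\rangle=-1$ and $\langle km|W|km\rangle=1$, so
\[
1=\Big|\sum_j c^j_k\overline{c^j_m}\Big|\le\sum_j|a^j_k||b^j_m|\,|a^j_m||b^j_k|\le\Big(\sum_j|a^j_k|^2|b^j_m|^2\Big)^{1/2}\Big(\sum_j|a^j_m|^2|b^j_k|^2\Big)^{1/2}=1 .
\]
Hence every nonzero $c^j_k\overline{c^j_m}$ is a negative real, which is impossible for three indices. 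So every product vector in any decomposition of $W$ has $a_kb_k\ne0$ for at most two $k$, and no full-support twirl can occur.

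Your fallback is circular. The commutant of your symmetry group is exactly this three-parameter family, so the "known result" is the lemma itself. Without full permutation symmetry, diagonal-unitary-invariant states can be PPT entangled (Horodecki's $3\otimes3$ states).

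The missing idea is a pairwise construction. For $k<m$, average over $\theta$ the projector onto $(|k\rangle+e^{i\theta}|m\rangle)\otimes(|k\rangle-e^{-i\theta}|m\rangle)$. This gives $|km\rangle\langle km|+|mk\rangle\langle mk|+(|kk\rangle-|mm\rangle)(\langle kk|-\langle mm|)$, and summing over $k<m$ yields $W$.

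The paper's proof has the same hole. Its coefficient $t/d$ in front of $S$ is negative when $t<0$, so it only covers $t\ge 0$. Also, $S$ there should read $Q+dP^+$, not $Q-dP^+$.
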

\begin{proof}
    Simple analysis shows that the operator $\rho_{p,q}$ is PPT if and only if
\[
p\ge 0,\qquad p+q\ge 0,\qquad
1-(d+1)p \le q \le 1+(d-1)p.
\]
Equivalently,
\[
p-\frac{1-p-q}{d}\ge 0
\quad\text{and}\quad
p+q+\frac{1-p-q}{d}\ge 0.
\]
Now, we show that any PPT state $\rho_{p,q}$ is separable. 
Defining $Q = \oper_d \otimes \oper_d - D$, one has

\[ \rho_{p,q} = t P^+ + p \,Q + (p+q) D = \frac{t}{d} S + \left( p - \frac td\right) Q + (p+q) D , \]
where $t=1 -p -q$, and 

\[ S = Q - dP^+ . \]
Now, observe that 

\[ S = d^2 \< |\phi(\bm{x})\>\<\phi(\bm{x})| \otimes |\overline{\phi(\bm{x})}\>\<\overline{\phi(\bm{x})}| \> \]
where

\[ |\phi(\bm{x})\> = \sum_{k=1}^d e^{i x_k} |e_k\> ,  \]
with $\bm{x} = (x_1,\ldots,x_d) \in \mathbb{R}^d$. Finally $\< ... \>$ denotes an average over vector of phases $\bm{x}$. This proves that $S$ is a separable operator. Hence, if $\rho_{p,q}$ is PPT it is necessarily separable being a convex combination of separable operators. 

\end{proof}

\section*{Acknowledgements}

D.C. was supported by the Polish National Science
Centre project No. 2024/55/B/ST2/01781.

\end{document}